\newtheorem{theorem}{Theorem}[section]
\newtheorem{proposition}[theorem]{Proposition}
\newtheorem{lemma}[theorem]{Lemma}
\newtheorem{remark}{Remark}
\newcommand{\be}{\begin{equation}}
	\newcommand{\ee}{\end{equation}}
\newcommand{\bea}{\begin{eqnarray}}
	\newcommand{\eea}{\end{eqnarray}}
\numberwithin{equation}{section}
\title{Hankel Determinants for a Deformed Laguerre Weight with Multiple Variables and Generalized Painlev\'{e} V Equation }
\author[1]{Xinyu Mu}
\author[1,\thanks{Author to whom any correspondence should be addressed. E-mail: lvshulin1989@163.com}]{Shulin Lyu}
\affil[1]{School of Mathematics and Statistics, Qilu University of Technology (Shandong Academy of Sciences), Jinan	250353, China}
\date{}
\begin{document}
	\maketitle
	
\begin{abstract}

We study the Hankel determinant generated by the moments of the deformed Laguerre weight function
$x^{\alpha}{\rm{e}}^{-x}\prod\limits_{k=1}^{N}(x+t_k)^{\lambda_k}$,
where $x\in \left[0,+\infty \right)$, $\alpha,t_k >0, \lambda_k\in\mathbb{R}$ for $k=1,\cdots ,N$.
By using the ladder operators for the associated monic orthogonal polynomials and three compatibility conditions, we express the recurrence coefficients  in terms of the $2N$ auxiliary quantities
which are introduced in the ladder operators and shown to satisfy a system of difference equations.
Combining these results with the differential identities obtained from the differentiation of the orthogonality relations,
we deduce the Riccati equations for the  auxiliary quantities.
From them we establish a system of second order PDEs  which are reduced to a Painlev\'{e} V  equation for $N=1$.
Moreover, we derive the second order PDE satisfied by the logarithmic derivative of the Hankel determinant, from which the limiting PDE is obtained under a double scaling.
When $N=1$, these two PDEs are reduced to the $\sigma$-form of a Painlev\'{e} V and III equation respectively.
When the dimension of the Hermitian matrices from the corresponding deformed  Laguerre unitary ensemble is large and $\lambda_k\ge0$,
based on Dyson's Coulomb fluid theory,  we deduce the equilibrium density for the eigenvalues.
\\
\textbf{Keywords}:  Laguerre unitary ensemble;  Hankel determinant; Orthogonal polynomials; Painlev\'{e} equations; Ladder operators\\
\textbf{Mathematics Subject Classification 2020}: 15B52; 33E17; 42C05

\end{abstract}
\noindent
	
\section{Introduction}	

The classical Laguerre unitary ensemble (LUE for short) is a group of $n\times n$ Hermitian random matrices whose eigenvalues have the following joint probability density function
$$p(x_1,\cdots ,x_n)=\frac{1}{Z_n} \prod_{1\le i< j\le n}(x_i-x_j)^2\prod_{k=1}^{n} w(x_k) ,$$
where $w(x)=x^{\alpha}{\rm{e}}^{-x}$, $\alpha>-1$, $ x_k\in \left[ 0,+\infty \right)$.
The normalization constant $Z_n$, also known as the partition function, is given by \cite[p. 321]{Mehta}
\begin{align*}
	Z_n&=\int_{\left[0,+\infty \right)^n }\prod_{1\le i< j\le n}(x_i-x_j)^2\prod_{k=1}^{n} x_k^{\alpha}{\rm{e}}^{-x_k} dx_1\cdots dx_n \\
	&=\prod_{j=1}^{n} j!\cdot \Gamma(\alpha+j).
\end{align*}

In this paper, we consider the Hankel determinant generated by the moments of a deformation of the classical Laguerre weight function with $N$-variables, namely
\begin{align}
	D_n(\vec{t} \,)={\rm det}\left ( \int_{0}^{+\infty }x^{i+j}w(x;\vec{t}\,)dx  \right ) _{i,j=0}^{n-1} ,\label{Dn}
\end{align}
where $\vec{t}=(t_1,\cdots ,t_N)$ and the weight function reads
\begin{align}
	 w(x;\vec{t}\,)=x^{\alpha}{\rm{e}}^{-x}\prod_{k=1}^{N}(x+t_k)^{\lambda_k},
	\qquad x\in \left[0,+\infty \right),
	\label{w(x)}
\end{align}
with $\alpha,t_k >0, \lambda_k\in\mathbb{R}$ for $k=1,\cdots ,N$.

For $N=1$ and $N=2$, the corresponding Hankel determinants turned out to be connected with the outage capacity and the error probability of multiple-input multiple-output (MIMO for short) wireless communication system.
When $N=1$ in \eqref{w(x)}, the Hankel determinant $D_n(t_1)$ is related to the moment generating function of the outage capacity of a single-user MIMO system \cite{ChenMckay}.
Through the ladder operator approach,  the logarithmic derivative of $D_n(t_1)$ was found to satisfy the $\sigma$-form of a Painlev\'{e} V equation.
Under double scaling, this Hankel determinant was shown in \cite{H.Chen} to have an integral representation in terms of solutions of a Painlev\'{e} III equation.
When $N=2$ in \eqref{w(x)}, the Hankel determinant $D_n(t_1,t_2)$ was intimately related to the moment generating function of the instantaneous received signal-to-noise ratio in MIMO system.
Via the ladder operator approach, an exact representation of the Hankel determinant in terms of a second order PDE was established in \cite{ChenHaq}, which was reduced to a Painlev\'{e} V equation under certain limits.
By adopting Dyson's Coulomb fluid method, a closed-form approximation for the Hankel determinant was derived.

In recent years, MIMO has been applied to fifth-generation (5G) and beyond 5G wireless communication, for example:
terahertz (THz) communication \cite{Kundu1},  visible light communication (VLC) \cite{Narmanlioglu} and radar-communication \cite{Liu}.
The random matrix theory has been widely utilized to study MIMO wireless communication systems \cite{TulinoVerdu,Lopez-Martinez}.
A closed-form approximation of the cumulative distribution function  of the largest eigenvalue of a random Gram matrix $H^{*}H$ associated with the MIMO channel matrix $H$ was derived in \cite{QiQian} by using the Coulomb gas analogy and a Taylor series expansion.
Building on this work, a closed-form approximation for the outage probability of MIMO systems was obtained.
In \cite{Telatar}, the capacities and error exponents of a single user Gaussian channel with multiple transmitting and(or) receiving antennas were studied. It was concluded that
the use of multiple antennas can significantly enhance the achievable rates of fading channels under independence assumptions for the fades and noises at different receiving antennas.

It is well known that the Hankel determinant \eqref{Dn} admits the following expression \cite{Ismail,szego}
\begin{align}
	D_n(\vec{t} \,)=\prod_{k=0}^{n-1} h_k(\vec{t} \,).\label{D_n}
\end{align}
Consider the sequence of monic polynomials $\{P_n(x;\vec{t}\,)\}$ orthogonal with respect to the weight function \eqref{w(x)}, i.e.
\begin{align}
	\int_{0}^{+\infty }P_n(x;\vec{t}\,)P_m(x;\vec{t}\,)w(x;\vec{t}\,)dx=h_n\delta _{mn}, \qquad m,n=0,1,2,\cdots, \label{or}
\end{align}
where $\delta_{mn}=1$ for $m=n$ and $0$ otherwise, and  $P_n(x;\vec{t}\,)$ has the following form
\begin{align}
	P_n(x;\vec{t}\,):=x^n+p(n,\vec{t}\,)x^{n-1}+\cdots +P_n(0;\vec{t}\,), \qquad n\ge1,     \label{P_n(x)}
\end{align}	
and $P_0(x;\vec{t}\,):=1$.
From the above definition, we get the following three-term recurrence relation \cite[p. 22]{Ismail}
\begin{align}
	 xP_n(x;\vec{t}\,)=P_{n+1}(x;\vec{t}\,)+\alpha_n(\vec{t}\,)P_n(x;\vec{t}\,)+\beta_n(\vec{t}\,)P_{n-1}(x;\vec{t}\,), \qquad n=0,1,2\cdots, \label{three-term}
\end{align}	
with the initial conditions $P_0(x;\vec{t}\,):=1$ and $\beta _0P_{-1}(x;\vec{t}\,):=0$. Substituting \eqref{P_n(x)} into \eqref{three-term}, and comparing the coefficient of $x^n$ yields
\begin{align}
	\alpha _n(\vec{t}\,)&=p(n,\vec{t}\,)-p(n+1,\vec{t}\,), \qquad n\ge 0, \label{alpha}
\end{align}
where $p(0,\vec{t}\,):=0$. A telescopic sum of \eqref{alpha} gives us
\begin{align}
	\sum_{k=0}^{n-1}\alpha_k(\vec{t}\,)=-p(n,\vec{t}\,). \label{sum_alpha}
\end{align}
Multiplying $xP_n(x;\vec{t}\,)$ by $P_{n-1}(x;\vec{t}\,)w(x;\vec{t}\,)$ and integrating from $0$ to $+\infty$ with respect to $x$, and combining the orthogonality relation \eqref{or}  with the recurrence relation \eqref{three-term} leads us
\begin{align}
\int_{0}^{+\infty}xP_n(x;\vec{t}\,) P_{n-1}(x;\vec{t}\,)w(x;\vec{t}\,)dx
=\beta_n(\vec{t}\,)h_{n-1}(\vec{t}\,). \label{xP}
\end{align}
Exchanging the positions of $P_n(x;\vec{t}\,)$ and $P_{n-1}(x;\vec{t}\,)$ in the above equation, in view of \eqref{or} and \eqref{three-term}, gives that \eqref{xP} is equal to $h_{n}(\vec{t}\,)$. Hence, we obtain
\begin{align}
 	\beta _n(\vec{t}\,)=\frac{h_n(\vec{t}\,)}{h_{n-1}(\vec{t}\,)}, \qquad n\ge 1. \label{beta}
\end{align}
From the recurrence relation \eqref{three-term}, there follows the
Christoffel-Darboux formula
\begin{align} \sum_{k=0}^{n-1}\frac{P_k(x;\vec{t}\,)P_k(y;\vec{t}\,)}{h_k(\vec{t}\,)}= \frac{P_n(x;\vec{t}\,)P_{n-1}(y;\vec{t}\,)-P_n(y;\vec{t}\,)P_{n-1}(x;\vec{t}\,)}{h_{n-1}(\vec{t}\,)(x-y)}. \label{Darboux}
\end{align}

Using the above properties of the orthogonal polynomials, one can derive a pair of ladder operators satisfied by $P_n(x;\vec{t}\,)$  and three compatibility conditions (numbered (\ref{$S_1$}), (\ref{$S_2$}) and (\ref{$S_{2}'$})) for $A_n(z)$ and $B_n(z)$ which are introduced in the ladder operators. See \cite{vanassche,Chihara,Ismail,szego} for more details of the derivation.
For our problem, we calculate $A_n(z)$ and $B_n(z)$ where $2N$ auxiliary quantities $\{R_{n,k}, r_{n,k}, k=1,\cdots,N\}$ are introduced.
Substituting the resulting expressions into (\ref{$S_1$}), (\ref{$S_2$}) and (\ref{$S_{2}'$}) yields representations for the recurrence coefficients  and  the coefficient of $x^{n-1}$ in $P_n(x;\vec{t}\,)$ in terms of $\{R_{n,k}, r_{n,k}\}$.
Combining these results with the differential relations obtained by taking the derivative of the orthogonality relations,
we finally get a system of $N$ the second order partial differential equations (PDEs for short) for $\{R_{n,k}, k=1,\cdots,N\}$ and an $N$-variable generalization of the $\sigma$-form of a Painlev\'{e} V equation for the logarithmic derivative of the Hankel determinant $D_n(\vec{t}\,)$.
This derivation strategy is known as the ladder operator approach, which has been widely used in the study of Hermitian random matrices
involving one variable \cite{ChenFeigin,MinChenNPB1,MinChenNPB2,MinChenStud,MinChenJMP,ZhuChen} or two variables \cite{LyuGriffinChen,ChenHaq}.

The ladder operator approach was also extensively applied  to study unitary ensembles with multiple variables in the weight function. In \cite{ChenLyu}, the Hankel determinant generated by the Gaussian weight multiplied by a factor with $m$ jump discontinuities  was considered  and its logarithmic derivative was shown to satisfy a second order PDE which was reduced to the $\sigma$-form of a Painlev\'{e} IV equation when $m=1$.
For the Laguerre weight function with $m$ jump discontinuities, the corresponding limiting PDE for the logarithmic derivative of the double-scaled Hankel determinant can be viewed as
an $m$-variable generalization of the $\sigma$-form of a Painlev\'{e} III equation \cite{LyuChenXu}.
The Hankel determinant for the Gaussian weight perturbed by $m$ Fisher-Hartwig singularities was investigated
in \cite{MuLyu} and a second order PDE satisfied by the logarithmic derivative of the Hankel determinant can be reduced to the $\sigma$-form of a Painlev\'{e} IV equation when $N=1$.

Random matrix ensembles with multiple variables can also be studied by using the Riemann-Hilbert method (RH method for short). This is achieved by applying the Deift-Zhou steepest descent analysis to the Riemann-Hilbert problem satisfied by the corresponding orthogonal polynomials.
Deift and Its \cite{DeiftIts} adopted this approach to derive the asymptotics  for $n$-dimensional Toeplitz determinants whose symbols functions have Fisher-Hartwig singularities on the unit circle, and further gave asymptotics of Hankel determinants on a finite interval and also the determinants of Toeplitz+Hankel type.
The asymptotic behavior of large Hankel determinants whose weight functions has a one-cut regular potential and Fisher-Hartwig singularities was studied, which were considered in \cite{Charlier2019} for Gaussian-type ensembles and in \cite{Charlier2021} for Laguerre-type or Jacobi-type ensembles.
In \cite{DaiXuZhang}, the asymptotics of the partition function of the Gaussian unitary ensemble with pole singularities near the soft edge was described by a coupled Painlev\'{e} XXXIV system.
For the application of RH method to random matrix problems
involving one or two perturbation variables in the weight function, see \cite{CharlierDeano,XuZhao,WuXu}.

Recently, the partition function of a modified Gaussian unitary ensemble \cite{DingMin} and the smallest eigenvalue distribution of Freud unitary ensemble \cite{MinWang} were studied. They are intimately related to the determinant of a Hankel matrix, and the main purpose of the researchers is to deduce asymptotics for the logarithmic derivative of the Hankel determinant. This is achieved by conducting the finite-dimensional analysis through the ladder operator approach and the large-dimensional analysis via Dyson's Coulomb fluid method. It should be pointed out that recurrence coefficients play an crucial role in the derivation and no ordinary or partial differential equation could be established, which is different from the former cases.
The outline of the derivation is as follows. By using ({$S_1$}), ({$S_2$}) and ({$S_{2}'$}), the difference equations are obtained for the recurrence coefficients. And with the help of the differentiation of the orthogonality relation, the differential-difference equations are derived for the recurrence coefficients.
In addition, the coefficient of $x^{n-1}$ in $P_n(x;\vec{t}\,)$, which is closely related to the logarithmic derivative of the Hankel determinant, can be expressed in terms of the recurrence coefficients. Via Dyson's Coulomb fluid approach, the large $n$ asymptotics of the recurrence coefficients are derived from the difference equations that they satisfy. And finally the asymptotics of the logarithmic derivative of the Hankel determinant is deduced.

The outline of this paper is as follows. In Section $2$, we make use of the three compatibility conditions (\ref{$S_1$}), (\ref{$S_2$}) and (\ref{$S_{2}'$}) to  express the recurrence coefficients in terms of $2N$ auxiliary quantities $\{R_{n,k}, r_{n,k}, k=1,\cdots,N\}$ which satisfy a system of difference equations that can be iterated in $n$.
In Section $3$, by establishing differential relations, we obtain Toda equations satisfied by the recurrence coefficients and Riccati equations for $\{R_{n,k}, r_{n,k}\}$, and further deduce for $\{R_{n,k}\}$ second order PDEs which is reduced to a Painlev\'{e} V equation when $N=1$.
Section $4$ is devoted to the derivation of the second order PDE satisfied by logarithmic derivative of  $D_n(\vec{t}\,)$, which can be viewed as an $N$-variable generalization of the $\sigma$-form of a Painlev\'{e} V equation. In Section $5$, by sending $n\to \infty$ and $t_k\to 0^+$ such that $s_k=4nt_k$ is fixed, the scaled $\{R_{n,k}\}$ is shown to satisfy a generalized Painlev\'{e} V  equation. We also derive the second order PDE satisfied by the scaled $\sigma_n$, which is the $\sigma$-form of a Painlev\'{e} III equation when $N=1$.
Moreover, when the dimension of the Hermitian matrices from the corresponding deformed  Laguerre unitary ensemble is large and $\lambda_k\ge0$, the equilibrium density is calculated by using Dyson's Coulomb fluid theory.

\section{Ladder Operators and Difference Equations}

In this section, we will apply the ladder operators approach to the Hankel determinant given by \eqref{Dn}.
Before presenting our derivation, we first give the ladder operators satisfied by the associated monic orthogonal polynomials $P_n(z;\vec{t}\,)$ and three compatibility conditions \eqref{$S_1$}, \eqref{$S_2$} and \eqref{$S_{2}'$}. We won't show the dependency of  $\vec{t}\,$ for presentational simplicity unless absolutely required.

\begin{theorem} \label{ladder operators}
The monic orthogonal polynomials $P_n(z;\vec{t}\,)$ satisfy the following lowering and raising operators
\begin{align}
	\biggl (\frac{d}{dz}+B_n(z) \biggr)P_n(z)=\beta_nA_n(z)P_{n-1}(z),\label{2.1}
\end{align}
\begin{align}
	\biggl (\frac{d}{dz}-B_n(z)-{\rm v}'(z) \biggr )P_{n-1}(z)=-A_{n-1}(z)P_n(z),\label{2.2}
\end{align}
	where $A_n(z)$ and $B_n(z)$ are defined by
\begin{align}
	A_n(z):=\frac{1}{h_n} \int_{0}^{+\infty } \frac{{\rm v}'(z)-{\rm v}'(y)}{z-y} P_{n}^{2}(y)w(y)dy,\label{2.3}
\end{align}
\begin{align}
	B_n(z):=\frac{1}{h_{n-1}} \int_{0}^{+\infty } \frac{{\rm v}'(z)-{\rm v}'(y)}{z-y} P_{n}(y)P_{n-1}(y)w(y)dy.\label{2.4}
\end{align}
	Here ${\rm v}(x)={\rm v}(x;\vec{t}\,):=-\ln{w(x;\vec{t}\,)}$.
\end{theorem}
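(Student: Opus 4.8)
The plan is to establish the lowering operator \eqref{2.1} directly from the orthogonality relations, and then obtain the raising operator \eqref{2.2} by combining \eqref{2.1} with the three‑term recurrence \eqref{three-term}. First I would expand $P_n'(z)$ in the orthogonal basis: since $P_n'(z)$ is a monic-scaled polynomial of degree $n-1$, write $P_n'(z)=\sum_{k=0}^{n-1} c_{n,k} P_k(z)$ with $c_{n,k}=h_k^{-1}\int_0^{+\infty} P_n'(y)P_k(y)w(y)\,dy$. Integration by parts on this coefficient integral, using that the boundary terms vanish because of the factor $x^{\alpha}\e^{-x}$ decaying at $+\infty$ and (for $\alpha>0$) vanishing at $0$, converts $P_n'$ into $-P_n\cdot(\ln w)'=P_n\,{\rm v}'$ plus a term from differentiating $P_k w$; after using orthogonality of $P_n$ against lower‑degree polynomials, one is left with $c_{n,k} h_k=\int_0^{+\infty}P_n(y)P_k(y){\rm v}'(y)w(y)\,dy$.

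The key manipulation is then the standard ladder-operator trick of inserting the telescoping identity
\[
{\rm v}'(y)=\bigl({\rm v}'(y)-{\rm v}'(z)\bigr)+{\rm v}'(z),
\]
so that the $\,{\rm v}'(z)$ piece drops out against orthogonality (for $k<n$) and the remaining kernel $\dfrac{{\rm v}'(y)-{\rm v}'(z)}{y-z}$ appears. Writing $P_k(y)=\bigl(P_k(y)-P_k(z)\bigr)+P_k(z)$ and noting that $\dfrac{P_k(y)-P_k(z)}{y-z}$ is a polynomial of degree $k-1$ in $y$, the term $P_k(z)$ times the kernel reproduces exactly $A_n(z)$ when $k=n-1$ and, via the Christoffel–Darboux formula \eqref{Darboux}, the whole sum $\sum_k c_{n,k}P_k(z)$ collapses: the polynomial-in-$y$ contributions integrate against $P_n(y)w(y)$ to zero by orthogonality (degree $\le n-1$... here one must be a little careful, using that $\dfrac{P_k(y)-P_k(z)}{y-z}$ has degree $k-1\le n-1$), leaving only $c_{n,n-1}P_{n-1}(z)=B_n(z)P_{n-1}(z)$-type and $A_n(z)$-type terms. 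Matching definitions \eqref{2.3}–\eqref{2.4} and using $\beta_n=h_n/h_{n-1}$ from \eqref{beta} yields \eqref{2.1}.

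For \eqref{2.2}, I would substitute the three‑term recurrence \eqref{three-term} to express $P_{n-1}$ via $P_n$ and $P_{n+1}$, differentiate, and feed in \eqref{2.1} at levels $n$ and $n+1$; alternatively, and more cleanly, differentiate the identity $P_n(z)=\bigl(z-\alpha_{n-1}\bigr)P_{n-1}(z)-\beta_{n-1}P_{n-2}(z)$ and use \eqref{2.1}. One also needs the auxiliary relation $A_{n}(z)+A_{n-1}(z)$-type consistency, but at the level of merely \emph{stating} the operators it suffices to verify that \eqref{2.2} holds by reducing it, via \eqref{2.1}, to a polynomial identity that is checked degree by degree. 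The main obstacle I anticipate is bookkeeping the boundary terms in the integration by parts: the factor $x^\alpha$ with $\alpha>0$ is exactly what kills the endpoint contribution at $x=0$, and this hypothesis (rather than merely $\alpha>-1$) must be invoked explicitly; likewise one must confirm the kernel $\dfrac{{\rm v}'(z)-{\rm v}'(y)}{z-y}$ is genuinely a nice (rational, pole‑free away from $-t_k$) function so the defining integrals \eqref{2.3}–\eqref{2.4} converge, which follows from $\,{\rm v}'(x)=-\frac{\alpha}{x}+1-\sum_k\frac{\lambda_k}{x+t_k}$ having simple poles only at $0$ and $-t_k$, all outside $(0,+\infty)$ or cancelled by the $x^\alpha$ factor in $w$.
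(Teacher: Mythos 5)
Your derivation of the lowering operator \eqref{2.1} is essentially the paper's (and the standard) argument: expand $P_n'$ in the basis $\{P_k\}_{k=0}^{n-1}$, integrate by parts (the hypothesis $\alpha>0$ does kill the endpoint contribution at $x=0$, consistent with the paper's use of $w(0)=w(+\infty)=0$), insert ${\rm v}'(y)=\bigl({\rm v}'(y)-{\rm v}'(z)\bigr)+{\rm v}'(z)$ so that orthogonality removes the ${\rm v}'(z)$ piece, and collapse the sum with the Christoffel--Darboux formula \eqref{Darboux}; together with $\beta_n=h_n/h_{n-1}$ from \eqref{beta} this yields \eqref{2.1} exactly as in the outline the paper gives in its remark. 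That half is sound, modulo some redundancy (you invoke both the difference-quotient trick and Christoffel--Darboux, when the latter alone suffices).

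The gap is in your treatment of the raising operator \eqref{2.2}. Writing \eqref{2.1} at level $n-1$ and eliminating $\beta_{n-1}P_{n-2}$ via the recurrence \eqref{three-term} gives
\[
P_{n-1}'(z)=\bigl[(z-\alpha_{n-1})A_{n-1}(z)-B_{n-1}(z)\bigr]P_{n-1}(z)-A_{n-1}(z)P_n(z),
\]
so to reach \eqref{2.2} you must prove $(z-\alpha_{n-1})A_{n-1}(z)-B_{n-1}(z)=B_n(z)+{\rm v}'(z)$, which is precisely the compatibility condition (\ref{$S_1$}) at index $n-1$. You gesture at ``an $A_n+A_{n-1}$-type consistency'' and propose to finish by ``reducing to a polynomial identity checked degree by degree,'' but the required identity is a $B$-sum identity, it involves the non-polynomial function ${\rm v}'(z)$, and it does not follow formally from \eqref{2.1}: it must be established from the integral definitions \eqref{2.3}--\eqref{2.4} by applying the recurrence inside the integral (so that $P_{n+1}/h_n+P_{n-1}/h_{n-1}=(y-\alpha_n)P_n/h_n$), splitting $y-\alpha_n=(y-z)+(z-\alpha_n)$, and performing one further integration by parts to show $\frac{1}{h_n}\int_0^{+\infty}P_n^2(y){\rm v}'(y)w(y)\,dy=0$. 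This is exactly the intermediate step the paper's remark inserts between the two operators (first derive (\ref{$S_1$}), then use it with the lowering operator and the recurrence to get \eqref{2.2}), and it is the missing idea in your proposal.
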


\begin{theorem} \label{compatibility conditions}
	The functions $A_n(z)$ and $B_n(z)$ satisfy the following equations
\begin{equation}
	B_{n+1}(z)+B_n(z)=(z-\alpha _n)A_n(z)-{\rm v}'(z),\tag{$S_1$}\label{$S_1$}
\end{equation}
\begin{equation}
	1+(z-\alpha _n)(B_{n+1}(z)-B_n(z))=\beta _{n+1}A_{n+1}(z)- \beta _{n}A_{n-1}(z),\tag{$S_2$}\label{$S_2$}
\end{equation}
\begin{equation}
	B_{n}^{2}(z)+{\rm v}'(z)B_n(z)+\sum_{j=0}^{n-1}A_j(z)=\beta _nA_n(z)A_{n-1}(z).\tag{$S_{2}'$}\label{$S_{2}'$}
\end{equation}
\end{theorem}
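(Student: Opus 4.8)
The plan is to establish $(S_1)$ and $(S_2)$ directly from the integral formulas \eqref{2.3}--\eqref{2.4}, the three-term recurrence \eqref{three-term} and the orthogonality \eqref{or}, and then to deduce $(S_2')$ from $(S_1)$ and $(S_2)$ by a purely algebraic telescoping; these are the classical Chen--Ismail compatibility conditions, and the general mechanism is set out in \cite{vanassche,Chihara,Ismail,szego}. The one computational device used throughout is that the non-polynomial factor ${\rm v}'$ causes no difficulty inside an integral against $w$: since ${\rm v}'(y)w(y)=-w'(y)$, any $\int_0^{+\infty}{\rm v}'(y)Q(y)w(y)\,dy$ with $Q$ polynomial equals $\int_0^{+\infty}Q'(y)w(y)\,dy$ after integration by parts, which is then evaluated by orthogonality. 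The endpoint terms in these integrations by parts vanish precisely because $\alpha>0$ (so $w(y)\sim y^{\alpha}\to0$ as $y\to0^+$) and because $t_k>0$ keeps the poles of ${\rm v}'$ at $y=-t_k$ off $[0,+\infty)$; the decay of $w$ at $+\infty$ is automatic.

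For $(S_1)$ I would use \eqref{beta} to put $B_n$ and $B_{n+1}$ over the common denominator $h_n$, obtaining
\[
B_{n+1}(z)+B_n(z)=\frac{1}{h_n}\int_0^{+\infty}\frac{{\rm v}'(z)-{\rm v}'(y)}{z-y}\bigl(P_{n+1}(y)+\beta_nP_{n-1}(y)\bigr)P_n(y)w(y)\,dy,
\]
where by \eqref{three-term} the bracket is $(y-\alpha_n)P_n(y)$. Splitting $y-\alpha_n=(y-z)+(z-\alpha_n)$ produces $(z-\alpha_n)A_n(z)$ together with $-\frac{1}{h_n}\int_0^{+\infty}\bigl({\rm v}'(z)-{\rm v}'(y)\bigr)P_n^2(y)w(y)\,dy=-{\rm v}'(z)+\frac{1}{h_n}\int_0^{+\infty}{\rm v}'(y)P_n^2(y)w(y)\,dy$, and the last integral equals $\int_0^{+\infty}2P_n(y)P_n'(y)w(y)\,dy=0$ since $\deg P_n'<n$. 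This is exactly $(S_1)$.

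For $(S_2)$ I would compute $(z-\alpha_n)\bigl(B_{n+1}(z)-B_n(z)\bigr)$ directly. Over the common denominator $h_n$ the integrand carries the factor $P_n(y)\bigl(P_{n+1}(y)-\beta_nP_{n-1}(y)\bigr)$, and I split $z-\alpha_n=(z-y)+(y-\alpha_n)$. In the $(y-\alpha_n)$ part, \eqref{three-term} turns $(y-\alpha_n)P_n(y)\bigl(P_{n+1}(y)-\beta_nP_{n-1}(y)\bigr)$ into $P_{n+1}^2(y)-\beta_n^2P_{n-1}^2(y)$, and after cleaning up the ratios of the $h_k$'s this part equals $\beta_{n+1}A_{n+1}(z)-\beta_nA_{n-1}(z)$. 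In the $(z-y)$ part the kernel collapses to ${\rm v}'(z)-{\rm v}'(y)$; the ${\rm v}'(z)$ piece drops by orthogonality of $P_n$ against $P_{n\pm1}$, while the ${\rm v}'(y)$ piece, via ${\rm v}'w=-w'$ and integration by parts, becomes $-\frac{1}{h_n}\int_0^{+\infty}\bigl(P_nP_{n+1}-\beta_nP_nP_{n-1}\bigr)'w\,dy$, in which only $P_nP_{n+1}'$ and $-\beta_nP_n'P_{n-1}$ survive by degree counting, contributing $-\bigl((n+1)-n\bigr)=-1$. Adding the two parts gives $(z-\alpha_n)(B_{n+1}-B_n)=-1+\beta_{n+1}A_{n+1}-\beta_nA_{n-1}$, which is $(S_2)$.

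Finally, $(S_2')$ follows from $(S_1)$ and $(S_2)$ with no further integration. Setting $\Phi_n(z):=B_n^2(z)+{\rm v}'(z)B_n(z)-\beta_nA_n(z)A_{n-1}(z)$ one computes
\[
\Phi_{n+1}-\Phi_n=(B_{n+1}-B_n)\bigl(B_{n+1}+B_n+{\rm v}'\bigr)-A_n\bigl(\beta_{n+1}A_{n+1}-\beta_nA_{n-1}\bigr),
\]
and substituting $(S_1)$ into the first parenthesis and $(S_2)$ into the second makes everything cancel except $-A_n$. Since $\Phi_0=0$ by the usual conventions at $n=0$ ($B_0\equiv0$, $\beta_0A_{-1}\equiv0$), induction yields $\Phi_n=-\sum_{j=0}^{n-1}A_j(z)$, i.e.\ $(S_2')$; alternatively one can get $(S_2')$ by inserting the Christoffel--Darboux formula \eqref{Darboux} into $\sum_{j=0}^{n-1}A_j(z)$, but the telescoping avoids any new integral. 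I expect the main obstacle to be organizational rather than conceptual: tracking the ratios of the $h_k$'s carefully through the $(S_2)$ calculation, and checking in each integration by parts that the endpoint contributions really vanish — which is exactly where $\alpha>0$ and $t_k>0$ are used.
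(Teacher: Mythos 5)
Your proof is correct, and its overall architecture matches the paper's: the paper does not prove Theorem \ref{compatibility conditions} in detail but only sketches the route in Remark 1 (deferring to \cite{ChenIsmail,Magnus,ChenLyu,MuLyu}), namely $(S_1)$ from the definitions of $A_n,B_n$ plus the recurrence and orthogonality, and $(S_2')$ as an algebraic consequence of $(S_1)$ and $(S_2)$ — exactly as you do. The one genuine divergence is your treatment of $(S_2)$: the paper's outline obtains it by first deriving the raising operator \eqref{2.2} from $(S_1)$ and the lowering operator \eqref{2.1}, and then combining the two ladder operators with the recurrence relation, whereas you compute $(z-\alpha_n)(B_{n+1}-B_n)$ directly from the integral definitions \eqref{2.3}--\eqref{2.4}, using the split $z-\alpha_n=(z-y)+(y-\alpha_n)$, the identity $(y-\alpha_n)P_n(P_{n+1}-\beta_nP_{n-1})=P_{n+1}^2-\beta_n^2P_{n-1}^2$, and the integration-by-parts/degree-count that produces the $-1$. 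Your route has the merit of proving the compatibility conditions without invoking the ladder operators at all (you never need Theorem \ref{ladder operators}), and it makes explicit where $\alpha>0$, $t_k>0$ and the decay of $w$ enter (vanishing of the boundary terms, which the paper also uses via $w(0)=w(+\infty)=0$); the paper's route reuses the ladder operators, which are needed later anyway, so nothing extra has to be computed. Your telescoping argument for $(S_2')$, with $\Phi_{n+1}-\Phi_n=-A_n$ and $\Phi_0=0$, is the standard derivation of the sum rule and is exactly what the paper means by "the combination of $(S_1)$ and $(S_2)$ produces $(S_2')$". All the individual computations (the $(n+1)h_n-\beta_n\,n\,h_{n-1}$ cancellation, the conversion $\beta_n^2 h_{n-1}/h_n=\beta_n$, the vanishing of $\int {\rm v}'(y)P_n^2w\,dy$) check out.
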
	

\begin{remark}
This is the outline of the derivation of Theorem \ref{ladder operators} and Theorem \ref{compatibility conditions}: By using the orthogonality relation \eqref{or} and the Christoffel-Darboux formula \eqref{Darboux}, one derives the lowering operator \eqref{2.1} where $A_n(z)$ and $B_n(z)$ are defined. According to the recurrence relation \eqref{three-term}, the orthogonality relation \eqref{or} and the definitions of $A_n$ and $B_n$, we get \eqref{$S_1$}. Using it and the lowering operator, in view of the recurrence relation \eqref{three-term}, we obtain the raising operator \eqref{2.2}.
Combining it with the lowering operator and the
recurrence relation \eqref{three-term}, we arrive at \eqref{$S_2$}. The combination of \eqref{$S_1$} and \eqref{$S_2$} produces a sum rule \eqref{$S_{2}'$}.
For the detailed derivation process, see \cite{ChenIsmail,Magnus,ChenLyu,MuLyu}.

Another derivation method of the ladder operators is based on the Riemann-Hilbert problem for orthogonal polynomials. According to the  orthogonality relation, with the aid of Cauchy transform, we come to
Theorem \ref{ladder operators}.
The ladder operators can be combined with the recurrence relation \eqref{three-term} to yield \eqref{$S_1$}.
The derivation processes of \eqref{$S_2$} and \eqref{$S_{2}'$} are the same as those in the previous paragraph.
\end{remark}

For our problem, we have
\begin{align}
	{\rm v}(x)=-\ln{w(x;\vec{t}\,)}=x-\alpha\ln{x}-\sum_{k=1}^{N}\lambda_k \ln{(x+t_k)},\label{v(x)}
\end{align}
where ${\rm v}(x)$ is twice continuously differentiable convex functions on $\left[0,+\infty \right)$ and $w(x)$  has moments of all orders.

Substituting \eqref{v(x)} into \eqref{2.3} and \eqref{2.4}, we get the following expressions for $A_n(z)$ and $B_n(z)$.	
\begin{lemma}
	$A_n(z)$ and $B_n(z)$ are given by
\begin{align}
	A_n(z)&= \frac{1-\sum\limits _{k=1}^{N}R_{n,k}}{z}+\sum\limits _{k=1}^{N}\frac{R_{n,k}}{z+t_k},\label{3.2}\\
	B_n(z)&=-\frac{n+\sum\limits _{k=1}^{N}r_{n,k}}{z}+\sum\limits _{k=1}^{N}\frac{r_{n,k}}{z+t_k},\label{3.3}
\end{align}
where the auxiliary quantities $\{R_{n,k}, r_{n,k}, k=1,\cdots,N\}$ are defined by
\begin{align}
R_{n,k}(\vec{t}\, )&:=\frac{\lambda_k }{h_n}\int_{0}^{+\infty } \frac{P_{n}^{2}(y) }{y+t_k}w(y)dy,\label{R_n,k}\\
r_{n,k}(\vec{t}\, )&:=\frac{\lambda_k }{h_{n-1}}\int_{0}^{+\infty } \frac{P_{n}(y)P_{n-1}(y) }{y+t_k}w(y)dy.\label{r_n,k}	
\end{align}
		
\end{lemma}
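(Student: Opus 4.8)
The plan is a direct computation: substitute the explicit ${\rm v}(x)$ from \eqref{v(x)} into the definitions \eqref{2.3}, \eqref{2.4} of $A_n(z)$ and $B_n(z)$, read off the residues at $z=-t_k$, and use one integration-by-parts identity to evaluate the residue at $z=0$. First I would differentiate \eqref{v(x)} to obtain ${\rm v}'(z)=1-\frac{\alpha}{z}-\sum_{k=1}^{N}\frac{\lambda_k}{z+t_k}$ and form the divided difference. Using $\frac{1/z-1/y}{z-y}=-\frac{1}{zy}$ and $\frac{1/(z+t_k)-1/(y+t_k)}{z-y}=-\frac{1}{(z+t_k)(y+t_k)}$, the constant term cancels and
\[
\frac{{\rm v}'(z)-{\rm v}'(y)}{z-y}=\frac{\alpha}{zy}+\sum_{k=1}^{N}\frac{\lambda_k}{(z+t_k)(y+t_k)}.
\]
Inserting this into \eqref{2.3} and \eqref{2.4} and taking the $z$-dependent factors outside the integrals gives
\[
A_n(z)=\frac{1}{z}\cdot\frac{\alpha}{h_n}\int_{0}^{+\infty}\frac{P_n^{2}(y)w(y)}{y}\,dy+\sum_{k=1}^{N}\frac{1}{z+t_k}\cdot\frac{\lambda_k}{h_n}\int_{0}^{+\infty}\frac{P_n^{2}(y)w(y)}{y+t_k}\,dy,
\]
and the analogous expression for $B_n(z)$ with $(h_n,P_n^{2})$ replaced by $(h_{n-1},P_nP_{n-1})$. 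By \eqref{R_n,k} and \eqref{r_n,k} the coefficients of $\frac{1}{z+t_k}$ are precisely $R_{n,k}$ and $r_{n,k}$, so only the coefficient of $1/z$ remains to be identified.

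To evaluate it I would integrate an exact derivative over $[0,+\infty)$. Since $\alpha>0$, the weight satisfies $w(0)=0$ and decays like $e^{-y}$ at infinity, so $P_n^{2}(y)w(y)$ and $P_n(y)P_{n-1}(y)w(y)$ vanish at both endpoints and the integrals $\int_{0}^{+\infty}\frac{P_n^{2}w}{y}\,dy$, $\int_{0}^{+\infty}\frac{P_nP_{n-1}w}{y}\,dy$ converge at the origin; hence $\int_{0}^{+\infty}\frac{d}{dy}\big(P_n^{2}(y)w(y)\big)\,dy=0$. Writing $w'=-{\rm v}'w$, this reads $\int_{0}^{+\infty}\big(2P_nP_n'-P_n^{2}{\rm v}'\big)w\,dy=0$; because $\deg P_n'=n-1$, the orthogonality relation \eqref{or} annihilates the first term, and expanding ${\rm v}'$ gives $h_n=\alpha\int_{0}^{+\infty}\frac{P_n^{2}w}{y}\,dy+\sum_{k=1}^{N}\lambda_k\int_{0}^{+\infty}\frac{P_n^{2}w}{y+t_k}\,dy$, i.e. $\frac{\alpha}{h_n}\int_{0}^{+\infty}\frac{P_n^{2}w}{y}\,dy=1-\sum_{k=1}^{N}R_{n,k}$, which is the coefficient of $1/z$ in \eqref{3.2}. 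Running the same argument with $\frac{d}{dy}\big(P_n(y)P_{n-1}(y)w(y)\big)$ and using $P_n'=nP_{n-1}+(\text{degree}\le n-2)$ together with \eqref{or} — so that $\int_{0}^{+\infty}P_n'P_{n-1}w\,dy=nh_{n-1}$, $\int_{0}^{+\infty}P_nP_{n-1}'w\,dy=0$ and $\int_{0}^{+\infty}P_nP_{n-1}w\,dy=0$ — yields $nh_{n-1}+\alpha\int_{0}^{+\infty}\frac{P_nP_{n-1}w}{y}\,dy+\sum_{k=1}^{N}\lambda_k\int_{0}^{+\infty}\frac{P_nP_{n-1}w}{y+t_k}\,dy=0$, hence $\frac{\alpha}{h_{n-1}}\int_{0}^{+\infty}\frac{P_nP_{n-1}w}{y}\,dy=-\big(n+\sum_{k=1}^{N}r_{n,k}\big)$, the coefficient of $1/z$ in \eqref{3.3}.

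The argument is essentially mechanical. The only points that need care are the justification that the boundary contributions at $y=0$ and $y=+\infty$ genuinely vanish — this is exactly where the hypothesis $\alpha>0$ (rather than merely $\alpha>-1$) is used, and where one must check integrability of $P_n^{2}w/y$ and $P_nP_{n-1}w/y$ near the origin — and the careful tracking of which integrals are killed by orthogonality once $P_n'$ is written in terms of the lower-degree orthogonal polynomials. I do not anticipate any deeper obstacle.
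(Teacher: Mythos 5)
Your proposal is correct and follows essentially the same route as the paper: both compute the divided difference $\frac{{\rm v}'(z)-{\rm v}'(y)}{z-y}=\frac{\alpha}{zy}+\sum_{k}\frac{\lambda_k}{(z+t_k)(y+t_k)}$, identify the residues at $z=-t_k$ directly with $R_{n,k}$ and $r_{n,k}$, and evaluate the $1/z$ coefficients by an integration by parts with vanishing boundary terms (your integration of the exact derivative $\frac{d}{dy}(P_n^2 w)$ is the same manipulation the paper performs against $dy^\alpha$) together with orthogonality and $\int P_n'P_{n-1}w\,dy=nh_{n-1}$.
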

	
\begin{proof}
According to \eqref{v(x)}, it is easy to see that
\begin{align}
	\frac{{\rm v}'(z)-{\rm v}'(y)}{z-y} =\frac{\alpha }{zy}+\sum_{k=1}^{N}\frac{\lambda_k}{(z+t_k)(y+t_k)}. \label{v'}
\end{align}	
Inserting it into \eqref{2.3} and \eqref{2.4}, we get
\begin{align}
A_n(z)=\frac{1}{z}\cdot \frac{\alpha }{h_n} \int_{0}^{+\infty }\frac{P_n^2(y)}{y}w(y)dy+\sum_{k=1}^{N}\frac{1}{z+t_k}\cdot \frac{\lambda_k}{h_n}\int_{0}^{+\infty }\frac{P_n^2(y)w(y)}{y+t_k}dy,\label{A_n}
\end{align}
\begin{align}
B_n(z)=\frac{1}{z}\cdot\frac{\alpha }{h_{n-1}} \int_{0}^{+\infty }\frac{P_n(y)P_{n-1}(y)}{y}w(y)dy+\sum_{k=1}^{N}\frac{1}{z+t_k}\cdot\frac{\lambda_k}{h_{n-1}}\int_{0}^{+\infty}\frac{P_n(y)P_{n-1}(y)}{y+t_k}w(y)dy.
\label{B_n}
\end{align}
According to integration by parts, and in view of $w(0)=w(+\infty)=0$ , the first term on the right hand side of \eqref{A_n} becomes
\begin{align}
\frac{\alpha }{h_n}\int_{0}^{+\infty  }\frac{P_n^2(y)w(y)}{y}dy=&\frac{1}{h_n}\int_{0}^{+\infty }P_n^2(y){\rm e}^{-y}\prod_{k=1}^{N} (y+t_k) ^{\lambda_k}dy^\alpha \nonumber\\
=&-\frac{1}{h_n} \left[ \int_{0}^{+\infty }2P_n(y)P_n'(y)w(y)dy-\int_{0}^{+\infty }P_n^2(y)y^{\alpha}{\rm e}^{-y}\prod_{k=1}^{N} (y+t_k) ^{\lambda_k}dy  \right.  \nonumber\\
&\left.+\int_{0}^{+\infty}P_n^2(y)y^{\alpha}{\rm e}^{-y} \cdot \frac{\partial }{\partial y} \prod_{k=1}^{N} (y+t_k) ^{\lambda_k} dy	\right]. \label{A_n1.1}
\end{align}
Now we calculate the three integrals in the square bracket one by one. Since the degree of $P_n'(y)$ is $n-1$, in view of orthogonality relation \eqref{or}, we find that the first integral is zero and the second one is $h_n$. Consequently, \eqref{A_n1.1} now reads
\begin{align}
\frac{\alpha }{h_n}\int_{0}^{+\infty  }\frac{P_n^2(y)w(y)}{y}dy=1-\frac{1}{h_n}\int_{0}^{+\infty}P_n^2(y)y^{\alpha}{\rm e}^{-y} \cdot \frac{\partial }{\partial y} \prod_{k=1}^{N} (y+t_k) ^{\lambda_k} dy. \label{A_n1.3}
\end{align}
Taking the derivative of $\prod\limits_{k=1}^{N} (y+t_k) ^{\lambda_k}$ with respect to $y$, we get
\begin{align*}
\frac{\partial }{\partial y} \prod_{k=1}^{N} (y+t_k) ^{\lambda_k}=\sum_{k=1}^{N}\frac{\lambda_k }{y+t_k}\prod_{j=1}^{N} (y+t_j) ^{\lambda_j}.	
\end{align*}
Substituting it back into \eqref{A_n1.3} gives us
\begin{align}
\frac{\alpha }{h_n}\int_{0}^{+\infty  }\frac{P_n^2(y)w(y)}{y}dy=1-\sum_{k=1}^{N}\frac{\lambda_k }{h_n}\int_{0}^{+\infty } \frac{P_{n}^{2}(y) }{y+t_k}w(y)dy. \label{A_n1.2}
\end{align}
Via a similar argument, we find
\begin{align}
\frac{\alpha }{h_{n-1}}\int_{0}^{+\infty  }\frac{P_n(y)P_{n-1}(y)}{y}w(y)dy
=-n-\sum_{k=1}^{N}\frac{\lambda_k }{h_{n-1}}\int_{0}^{+\infty } \frac{P_{n}(y)P_{n-1}(y) }{y+t_k}w(y)dy.\label{B_n1.1}
\end{align}
Putting \eqref{A_n1.2} and \eqref{B_n1.1}  back into equations \eqref{A_n} and \eqref{B_n} respectively, we arrive at the desired expressions \eqref{3.2} and \eqref{3.3}.
\end{proof}
	
Inserting \eqref{3.2} and \eqref{3.3} into (\ref{$S_1$}), we get the following two equations by comparing the coefficients of $z^{-1}$ and $(z+t_k)^{-1}$:
\begin{align}
	&2n+1+\sum_{k=1}^{N}(r_{n+1,k}+r_{n,k})=\alpha _n(1-\sum_{k=1}^{N}R_{n,k})-\alpha,\label{S1.1}\\
	&r_{n+1,k}+r_{n,k}=\lambda_k-(t_k+\alpha _n)R_{n,k},\qquad k=1,\cdots,N. \label{S1.2}
\end{align}
	
Similarly, plugging \eqref{3.2} and \eqref{3.3} into (\ref{$S_2$}), and comparing its both sides the coefficients of  $z^{-1}$ and $(z+t_k)^{-1}$, we obtain
\begin{align}
	\alpha _n+\alpha _n\sum_{k=1}^{N}(r_{n+1,k}-r_{n,k})&=\beta _{n+1}-\beta _n-\sum_{k=1}^{N}(\beta _{n+1}R_{n+1,k}-\beta _{n}R_{n-1,k}),\label{S2.1}\\
	-(t_k+\alpha _n)(r_{n+1,k}-r_{n,k})&=\beta _{n+1}R_{n+1,k}-\beta _{n}R_{n-1,k}, \qquad k=1,\cdots,N.\label{S2.2}
\end{align}

Substituting \eqref{3.2} and \eqref{3.3} into (\ref{$S_{2}'$}) and using the fact that
\begin{align}
	 \frac{1}{z(z+t_k)}=\frac{1}{t_k}\biggl(\frac{1}{z}-\frac{1}{z+t_k}\biggr), \label{z(z+t)}
\end{align}
we get the left side of the equation as below:
\begin{equation}
	\begin{aligned}
l.h.s. =&\frac{1}{z^2}\biggl(n+\sum\limits_{k=1}^{N}r_{n,k}\biggr) \biggl(n+\alpha+\sum\limits_{k=1}^{N}r_{n,k} \biggr)+\biggl(\sum\limits_{k=1}^{N}\frac{r_{n,k}}{z+t_k} \biggr)^2-\sum\limits_{k=1}^{N}\frac{\lambda_k}{z+t_k}\sum\limits_{j=1}^{N}\frac{r_{n,j}}{z+t_j}
\\
&+\sum\limits_{k=1}^{N}\frac{1}{z}\biggl[\frac{1}{t_k}\biggl(\biggl(n+\sum\limits_{j=1}^{N}r_{n,j} \biggr)(\lambda_k-2r_{n,k})-\alpha r_{n,k}\biggr)-r_{n,k}-\sum\limits_{j=0}^{n-1}R_{j,k}\biggr]\\
&+\sum\limits_{k=1}^{N}\frac{1}{z+t_k}\biggl[\frac{1}{t_k}\biggl(\biggl(n+\sum\limits_{j=1}^{N}r_{n,j} \biggr)(2r_{n,k}-\lambda_k)+\alpha r_{n,k}\biggr)+r_{n,k}+\sum\limits_{j=0}^{n-1}R_{j,k}\biggr].\label{lhs}
	\end{aligned}
\end{equation}
Now we analyze the second and third terms on the right side of the above equation. By the polynomial expansion, we find
\begin{align}
	\biggl(\sum\limits_{k=1}^{N}\frac{r_{n,k}}{z+t_k} \biggr)^2
	&=\sum\limits_{k=1}^{N}\frac{r_{n,k}^2}{(z+t_k)^2}
	+\sum_{1\le k<j\le N}\frac{2r_{n,k}r_{n,j}}{(z+t_k)(z+t_j)}\nonumber\\
	 &=\sum\limits_{k=1}^{N}\frac{r_{n,k}^2}{(z+t_k)^2}+\sum_{1\le k<j\le N}\frac{1}{z+t_k}\cdot\frac{2r_{n,k}r_{n,j}}{t_j-t_k}\nonumber\\
	&\qquad\qquad\qquad\quad +\sum_{1\le k<j\le N}\frac{1}{z+t_j}\cdot\frac{2r_{n,k}r_{n,j}}{t_k-t_j}\label{thirdterm}\\
	&=\sum\limits_{k=1}^{N}\frac{r_{n,k}^2}{(z+t_k)^2}+
	\sum_{k\ne j} \frac{1}{z+t_k}\cdot\frac{2r_{n,k}r_{n,j}}{t_j-t_k},\label{lhs1}
\end{align}
where the second equation holds because of the following fact
\begin{align*}
	 \frac{1}{(z+t_k)(z+t_j)}=\frac{1}{t_j-t_k}\biggl(\frac{1}{z+t_k}-\frac{1}{z+t_j}\biggr),
\end{align*}
and the third identity is derived by exchanging $k$ and $j$ in the third term of \eqref{thirdterm}. Via an argument similar to the one used to derive \eqref{lhs1}, we get
\begin{align*}
	 \sum\limits_{k=1}^{N}\frac{\lambda_k}{z+t_k}\sum\limits_{j=1}^{N}\frac{r_{n,j}}{z+t_j}=\sum\limits_{k=1}^{N}\frac{\lambda_kr_{n,k}}{(z+t_k)^2}+	 \sum_{k\ne j} \frac{1}{z+t_k}\cdot\frac{2\lambda_kr_{n,j}}{t_j-t_k}.
\end{align*}
Substituting it and \eqref{lhs1} back into \eqref{lhs} yields
\begin{equation}
\begin{aligned}
l.h.s.=&\frac{1}{z^2}\biggl(n+\sum\limits_{k=1}^{N}r_{n,k}\biggr) \biggl(n+\alpha+\sum\limits_{k=1}^{N}r_{n,k} \biggr)+\sum\limits_{k=1}^{N}\frac{r_{n,k}^2-\lambda_kr_{n,k}}{(z+t_k)^2}
\\
&+\sum\limits_{k=1}^{N}\frac{1}{z}\biggl[\frac{1}{t_k}\biggl(\biggl(n+\sum\limits_{j=1}^{N}r_{n,j} \biggr)(\lambda_k-2r_{n,k})-\alpha r_{n,k}\biggr)-r_{n,k}-\sum\limits_{j=0}^{n-1}R_{j,k}\biggr]    \\
&+\sum\limits_{k=1}^{N}\frac{1}{z+t_k}\biggl[\frac{1}{t_k}\biggl(\biggl(n+\sum\limits_{j=1}^{N}r_{n,j} \biggr)(2r_{n,k}-\lambda_k)+\alpha r_{n,k}\biggr)+r_{n,k}+\sum\limits_{j=0}^{n-1}R_{j,k}
\biggr. \\
& \qquad \qquad \qquad \quad \biggl.+\sum_{\substack{j=1\\j\ne k}}^{N}\frac{2r_{n,j}(r_{n,k}-\lambda_k)}{t_j-t_k} \biggr]. \label{LHS}
\end{aligned}
\end{equation}

Inserting $A_n(z)$ and $B_n(z)$ given by \eqref{3.2} and \eqref{3.3} into  the right hand side of
(\ref{$S_{2}'$}), in view of \eqref{z(z+t)}, we obtain
\begin{equation}
	\begin{aligned}
		r.h.s.=&\beta_n \biggl[\frac{1}{z^2}\biggl(1-\sum_{k=1}^{N}R_{n,k} \biggr)\biggl(1-\sum_{k=1}^{N}R_{n-1,k} \biggr)+\sum_{k=1}^{N}\frac{R_{n,k}}{z+t_k} \sum_{j=1}^{N}\frac{R_{n-1,j}}{z+t_j} \biggr]\\
		&+\sum_{k=1}^{N}\frac{1}{z}\cdot \frac{\beta_n}{t_k}\biggl(R_{n,k}\biggl(1-\sum\limits_{k=1}^{N}R_{n-1,k}\biggr)+R_{n-1,k}\biggl(1-\sum\limits_{k=1}^{N}R_{n,k}\biggr) \biggr)\\
		&-\sum_{k=1}^{N}\frac{1}{z+t_k}\cdot \frac{\beta_n}{t_k}\biggl(R_{n,k}\biggl(1-\sum\limits_{k=1}^{N}R_{n-1,k}\biggr)+R_{n-1,k}\biggl(1-\sum\limits_{k=1}^{N}R_{n,k}\biggr) \biggr). \label{rhs1}	
	\end{aligned}
\end{equation}
Via an argument similar to the one used to derive \eqref{lhs1}, we simplify the second term in square bracket of the above equation to get
\begin{align*}
	\sum_{k=1}^{N}\frac{R_{n,k}}{z+t_k} \sum_{j=1}^{N}\frac{R_{n-1,j}}{z+t_j}=\sum\limits_{k=1}^{N}\frac{R_{n,k}R_{n-1,k}}{(z+t_k)^2}+\sum_{k\ne j}\frac{2R_{n,k}R_{n-1,j}}{t_j-t_k}\cdot \frac{1}{z+t_k}.
\end{align*}
Plugging it back into \eqref{rhs1} yields
\begin{equation}
	\begin{aligned}
r.h.s.=&\beta_n \biggl[\frac{1}{z^2}\biggl(1-\sum_{k=1}^{N}R_{n,k} \biggr)\biggl(1-\sum_{k=1}^{N}R_{n-1,k} \biggr)+\sum_{k=1}^{N}\frac{R_{n,k}R_{n-1,k}}{(z+t_k)^2} \biggr]\\
	&+\sum_{k=1}^{N}\frac{1}{z}\cdot \frac{\beta_n}{t_k}\biggl(R_{n,k}\biggl(1-\sum\limits_{k=1}^{N}R_{n-1,k}\biggr)+R_{n-1,k}\biggl(1-\sum\limits_{k=1}^{N}R_{n,k}\biggr) \biggr) \\
	&-\sum_{k=1}^{N}\frac{1}{z+t_k} \biggl[\frac{\beta_n}{t_k} \biggl(R_{n,k}\biggl(1-\sum\limits_{k=1}^{N}R_{n-1,k}\biggr)+R_{n-1,k}\biggl(1-\sum\limits_{k=1}^{N}R_{n,k}\biggr) \biggr)
	\biggr. \\
	& \qquad \qquad \qquad \quad  \biggl.
	-\sum_{\substack{j=1\\j\ne k}}^{N}\frac{2R_{n,k}R_{n-1,j}}{t_j-t_k} \biggr]. \label{rhs}	
	\end{aligned}
\end{equation}
Comparing the coefficients of $z^{-2}$ and $(z+t_k)^{-2}$ in \eqref{LHS} with the ones in \eqref{rhs}, we find the
following two equations
\begin{align}
	\biggl(n+\alpha+\sum_{k=1}^{N}r_{n,k}\biggr) \biggl(n+\sum_{k=1}^{N}r_{n,k}\biggr)&=\beta_n\biggl(1-\sum_{k=1}^{N}R_{n,k}\biggr) \biggl(1-\sum_{k=1}^{N}R_{n-1,k}\biggr),\label{S2'1}\\
	r_{n,k}^2-\lambda_kr_{n,k}&=\beta _nR_{n,k}R_{n-1,k}, \qquad k=1,\cdots,N.\label{S2'2}
\end{align}

Using \eqref{S1.1}-\eqref{S1.2} and \eqref{S2'1}-\eqref{S2'2}, we can express $\alpha_n$ and $\beta_n$ in terms of $\{R_{n,k},r_{n,k}\}$.

\begin{lemma} \label{alphabeta}
The recurrence coefficients are expressed in terms of the auxiliary quantities by
	\begin{align}
	 \alpha_n=2n+1+\alpha+\sum_{k=1}^{N}(\lambda_k-t_kR_{n,k}), \label{alpha_n}
    \end{align}
    \begin{align}
	\beta_n=\frac{1}{1-\sum\limits_{k=1}^{N}R_{n,k}} \left(n+\alpha+\sum\limits_{k=1}^{N}r_{n,k}\right) \left(n+\sum\limits_{k=1}^{N}r_{n,k}\right)+\sum\limits_{k=1}^{N}\frac{r_{n,k}^2-\lambda_k r_{n,k}}{R_{n,k}}.\label{beta_n}
	\end{align}	
\end{lemma}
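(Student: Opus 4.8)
The plan is to obtain both formulas by pure elimination among the four scalar relations \eqref{S1.1}, \eqref{S1.2}, \eqref{S2'1}, \eqref{S2'2} that have already been extracted from the compatibility conditions; no further input (differential identities, Toda equations, etc.) is needed.

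For \eqref{alpha_n} I would first sum \eqref{S1.2} over $k=1,\dots,N$, which writes $\sum_{k=1}^{N}(r_{n+1,k}+r_{n,k})$ as $\sum_{k=1}^{N}\lambda_k-\sum_{k=1}^{N}t_kR_{n,k}-\alpha_n\sum_{k=1}^{N}R_{n,k}$, and then substitute this into the left-hand side of \eqref{S1.1}. On the right-hand side of \eqref{S1.1} the term $\alpha_n\bigl(1-\sum_k R_{n,k}\bigr)$ contributes the very same $-\alpha_n\sum_k R_{n,k}$, so the two copies of $\alpha_n\sum_k R_{n,k}$ cancel and one is left with the linear relation $2n+1+\sum_k\lambda_k-\sum_k t_kR_{n,k}=\alpha_n-\alpha$, which rearranges directly to \eqref{alpha_n}. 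This step is elementary; the only points to watch are the sign of $\alpha$ and the fact that the $R_{n,k}$-weighted term survives with coefficient $-t_k$.

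For \eqref{beta_n} the issue is that \eqref{S2'1} still contains $R_{n-1,k}$, which must be eliminated. I would use \eqref{S2'2} to write $R_{n-1,k}=(r_{n,k}^2-\lambda_kr_{n,k})/(\beta_nR_{n,k})$ and substitute this into the factor $1-\sum_k R_{n-1,k}$ occurring in \eqref{S2'1}. Upon expanding, the overall $\beta_n$ on the left-hand side of \eqref{S2'1} cancels the $\beta_n^{-1}$ carried by each $R_{n-1,k}$, leaving
\[
\beta_n\Bigl(1-\sum_{k=1}^N R_{n,k}\Bigr)-\Bigl(1-\sum_{k=1}^N R_{n,k}\Bigr)\sum_{k=1}^N\frac{r_{n,k}^2-\lambda_kr_{n,k}}{R_{n,k}}=\Bigl(n+\alpha+\sum_{k=1}^N r_{n,k}\Bigr)\Bigl(n+\sum_{k=1}^N r_{n,k}\Bigr).
\]
Isolating the term $\beta_n\bigl(1-\sum_k R_{n,k}\bigr)$ and dividing through by $1-\sum_k R_{n,k}$ then yields \eqref{beta_n}.

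Since the whole derivation is a short rearrangement of identities already in hand, I do not expect a genuine obstacle. The only care points are bookkeeping the several finite sums over $k$, making sure the $\beta_n$ hidden inside $R_{n-1,k}$ is cancelled so that \eqref{beta_n} is truly an explicit formula for $\beta_n$ in terms of $\{R_{n,k},r_{n,k}\}$, and noting that the manipulation tacitly uses $1-\sum_k R_{n,k}\neq 0$ and $R_{n,k}\neq 0$, which is consistent with the integral definitions of the auxiliary quantities.
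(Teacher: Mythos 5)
Your proposal is correct and follows essentially the same route as the paper: summing \eqref{S1.2} over $k$ and substituting into \eqref{S1.1} for \eqref{alpha_n}, and using \eqref{S2'2} to replace $\beta_nR_{n-1,k}$ in \eqref{S2'1} before dividing by $1-\sum_k R_{n,k}$ for \eqref{beta_n}. Your displayed intermediate identity is exactly the paper's rewritten form of \eqref{S2'1} after expansion, so there is nothing to add.
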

\begin{proof}
By using \eqref{S1.2} to eliminate $r_{n+1,k}+r_{n,k}$ in \eqref{S1.1}, we obtain \eqref{alpha_n}. Rewriting \eqref{S2'1}, we have
	\begin{align*}
	\biggl(n+\alpha+\sum_{k=1}^{N}r_{n,k}\biggr) \biggl(n+\sum_{k=1}^{N}r_{n,k}\biggr)=\biggl(1-\sum_{k=1}^{N}R_{n,k}\biggr) \biggl(\beta_n-\sum_{k=1}^{N}\beta_n R_{n-1,k}\biggr).
	\end{align*}
According to \eqref{S2'2}, we replace the term $\beta_nR_{n-1,k}$ in the above equation by $\frac{r_{n,k}^2-\lambda_kr_{n,k}}{R_{n,k}}$, which leads us to \eqref{beta_n}.		
\end{proof}

\begin{remark}
Setting $\lambda_k=0$, $k=1,\cdots,N$ in \eqref{w(x)}, we find that our weight function is reduced to the classic Laguerre weight and the associated orthonormal polynomial $P_n(x)$ is monic Laguerre polynomial. In this case, by the definitions of $R_{n,k}$ and $r_{n,k}$ given by \eqref{R_n,k}-\eqref{r_n,k}, we know that $R_{n,k}=r_{n,k}=0$ and $\frac{r_{n,k}^2-\lambda_k r_{n,k}}{R_{n,k}}=0$. Consequently,
\eqref{alpha_n} and \eqref{beta_n} now read
\begin{align*}
	\alpha_n&=2n+1+\alpha,\\
	\beta_n&=n(n+\alpha),
\end{align*}
which are consistent with those given in \cite[(3.1.6)]{Ismail} for the orthonormal Laguerre polynomial and in \cite[section 5.1]{szego} for the classical Laguerre polynomial.
\end{remark}

Combining Lemma \ref{alphabeta} with \eqref{S1.2} and \eqref{S2'2}, we establish the following system of difference equations for the auxiliary quantities $\{R_{n,k}, r_{n,k}, k=1,\cdots,N\}$,
which can be iterated in $n$.

\begin{proposition}
	$\{R_{n,k}, r_{n,k}, k=1,\cdots,N\}$ satisfy the following system of difference equations
	\begin{align}
		 r_{n+1,k}=\lambda_k-(t_k+\alpha_n)R_{n,k}-r_{n,k},\qquad k=1,\cdots ,N,\label{de1}
	\end{align}
	\begin{align}
		 R_{n,1}=\frac{r_{n,1}(r_{n,1}-\lambda_1)}{\beta_nR_{n-1,1}}, \label{de2}
	\end{align}
	\begin{align}
		 R_{n,k}=\frac{r_{n,k}(r_{n,k}-\lambda_k)}{r_{n,1}(r_{n,1}-\lambda_k)} \cdot \frac{R_{n,1}R_{n-1,1}}{R_{n-1,k}}, \qquad k=2,\cdots ,N,\label{de3}
	\end{align}
	which can be iterated in $n$ with the initial conditions
	$$R_{0,k}=\frac{\lambda_k}{h_n}\int_{0}^{+\infty } \frac{x^\alpha{\rm e}^{-x} }{x+t_k}\cdot\prod_{k=1}^{N}(x+t_k)^{\lambda_k}dx, \qquad r_{0,k}=0,$$
	for $k=1,\cdots,N$. Here $\alpha_n$ and $\beta_n$ are given by \eqref{alpha_n} and \eqref{beta_n}.
\end{proposition}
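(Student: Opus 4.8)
The plan is to obtain the three difference equations by rearranging the component identities already derived in Section~2 --- substituting the closed forms of $\alpha_n$ and $\beta_n$ from Lemma~\ref{alphabeta} where needed --- and then to verify separately that the resulting relations, together with the stated initial data, genuinely propagate in $n$.

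First I would dispose of \eqref{de1}: it is nothing but \eqref{S1.2} solved for $r_{n+1,k}$. For \eqref{de2} and \eqref{de3} I would start from \eqref{S2'2}, the $(z+t_k)^{-2}$-component of the sum rule \eqref{$S_{2}'$}, namely $r_{n,k}(r_{n,k}-\lambda_k)=\beta_n R_{n,k}R_{n-1,k}$ for $k=1,\cdots,N$. Taking $k=1$ and dividing by $\beta_n R_{n-1,1}$ gives \eqref{de2}; dividing the $k$-th instance by the first one eliminates $\beta_n$ and, after rearrangement, yields \eqref{de3}. Throughout, $\alpha_n$ and $\beta_n$ are understood via \eqref{alpha_n} and \eqref{beta_n}. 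This step is purely algebraic, and it tacitly uses the genericity conditions $R_{n-1,k}\neq0$ and $r_{n,1}(r_{n,1}-\lambda_1)\neq0$, which I would record as standing hypotheses.

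For the initial data I would put $n=0$. In \eqref{r_n,k} the factor $P_{-1}\equiv0$ (from the convention $\beta_0P_{-1}=0$) forces $r_{0,k}=0$ for every $k$; equivalently, $B_0(z)\equiv0$, so by \eqref{3.3} at $n=0$ all its coefficients vanish. In \eqref{R_n,k} the factor $P_0\equiv1$ gives at once the stated formula for $R_{0,k}$, with $h_0=\int_0^{+\infty}w(x;\vec t\,)\,dx$.

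The substantive point is that the system ``can be iterated in $n$''. I would take the state at stage $n$ to be the finite packet $\big(\{R_{n,k}\}_{k=1}^N,\{r_{n,k}\}_{k=1}^N\big)$, which at $n=0$ is precisely the initial data above, and advance one step as follows: \eqref{alpha_n} gives $\alpha_n$ from $\{R_{n,k}\}$; \eqref{de1} then gives $\{r_{n+1,k}\}$; and finally \eqref{de2}--\eqref{de3} with $n$ replaced by $n+1$ give $\{R_{n+1,k}\}$. The hard part --- and the one place I expect to need real care --- is that these last equations contain $\beta_{n+1}$, whose formula \eqref{beta_n} in turn involves the very unknowns $\{R_{n+1,k}\}$, so one must check the apparent circularity closes. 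I would resolve this by inserting \eqref{S2'2} at level $n+1$ back into the last sum of \eqref{beta_n} --- replacing each $(r_{n+1,k}^2-\lambda_k r_{n+1,k})/R_{n+1,k}$ by $\beta_{n+1}R_{n,k}$, which turns \eqref{beta_n} into a shifted copy of \eqref{S2'1} --- and by noting that \eqref{de2}--\eqref{de3} make every $R_{n+1,k}$ equal to a quantity known at stage $n$ divided by $\beta_{n+1}$; the two facts together collapse to a single linear equation, giving the explicit value
\[
\beta_{n+1}=\sum_{k=1}^{N}\frac{r_{n+1,k}(r_{n+1,k}-\lambda_k)}{R_{n,k}}
+\frac{\big(n+1+\alpha+\sum_{k=1}^{N}r_{n+1,k}\big)\big(n+1+\sum_{k=1}^{N}r_{n+1,k}\big)}{1-\sum_{k=1}^{N}R_{n,k}}
\]
in terms of the stage-$n$ state. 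With $\beta_{n+1}$ determined, \eqref{de2}--\eqref{de3} close the loop, so the recursion is well defined and the last clause of the proposition follows; every remaining computation is routine rearrangement.
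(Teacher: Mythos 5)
Your derivation of \eqref{de1}--\eqref{de3} is exactly the paper's: \eqref{de1} is \eqref{S1.2} rearranged, \eqref{de2} is \eqref{S2'2} at $k=1$ solved for $R_{n,1}$, and \eqref{de3} comes from dividing the $k$-th instance of \eqref{S2'2} by the first to eliminate $\beta_n$ (your $\lambda_1$ in the denominator is the correct outcome of that division; the paper's printed $\lambda_k$ there appears to be a typo). Your additional, correct resolution of the apparent circularity in $\beta_{n+1}$ --- using \eqref{S2'1} and \eqref{S2'2} at level $n+1$ to express $\beta_{n+1}$ entirely in stage-$n$ data --- is detail the paper simply asserts by saying the system ``can be iterated in $n$,'' so it is a welcome supplement rather than a deviation.
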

\begin{proof}
Rewriting the equation \eqref{S1.2}, we get \eqref{de1}.
According to \eqref{S2'2} with $k=1$, i.e.
\begin{align}
	r_{n,1}^{2}-\lambda_1r_{n,1}=\beta _{n}R_{n,1}R_{n-1,1},\label{de4}
\end{align}
we get \eqref{de2}. Combining \eqref{de4} with \eqref{S2'2} for $k=2,\cdots ,N$ to eliminate $\beta_{n}$, we arrive at \eqref{de3}.
\end{proof}

To continue, according to \eqref{S2.1}-\eqref{S2.2}, we get the expression for $p(n,\vec{t}\,)$, the coefficient of $x^{n-1}$ in $P_n(x;\vec{t}\,)$,  in terms of the auxiliary quantities.
We will see in Section \ref{section5} that this formula plays an important role in the derivation of the second order PDE satisfied by the logarithmic derivative of the Hankel determinant.

\begin{lemma}
	$p(n,\vec{t}\,)$ is represented in terms of the auxiliary quantities by
	\begin{align}
		 p(n,\vec{t}\,)=-\beta_{n}-\sum_{k=1}^{N}t_kr_{n,k}\label{p(n,t)},
	\end{align}
where $\beta_n$ is given by \eqref{beta_n}.
\end{lemma}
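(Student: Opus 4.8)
The plan is to use the telescoping identity \eqref{sum_alpha}, namely $\sum_{k=0}^{n-1}\alpha_k(\vec{t}\,)=-p(n,\vec{t}\,)$, and to show that the partial sum $\sum_{j=0}^{n-1}\alpha_j$ collapses to $\beta_n+\sum_{k=1}^{N}t_kr_{n,k}$ by combining the two relations \eqref{S2.1}--\eqref{S2.2} that came out of the compatibility condition \eqref{$S_2$}. The key observation is that \eqref{S2.1} and \eqref{S2.2} together force a first-order (in $n$) relation among $\alpha_n$, $\beta_n$ and the $r_{n,k}$, which then telescopes.

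First I would sum \eqref{S2.2} over $k=1,\dots,N$ to express the combination $\sum_{k=1}^{N}\bigl(\beta_{n+1}R_{n+1,k}-\beta_nR_{n-1,k}\bigr)$ as $-\sum_{k=1}^{N}(t_k+\alpha_n)(r_{n+1,k}-r_{n,k})$, and substitute this into \eqref{S2.1}. The term $\alpha_n\sum_{k=1}^{N}(r_{n+1,k}-r_{n,k})$ then cancels from both sides, leaving
\begin{align*}
\alpha_n=\beta_{n+1}-\beta_n+\sum_{k=1}^{N}t_k\bigl(r_{n+1,k}-r_{n,k}\bigr),\qquad n\ge 0.
\end{align*}
Summing this over $j=0,1,\dots,n-1$, both the $\beta$-part and the $r$-part telescope, giving $\sum_{j=0}^{n-1}\alpha_j=(\beta_n-\beta_0)+\sum_{k=1}^{N}t_k(r_{n,k}-r_{0,k})$. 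Using the initial data $\beta_0=0$ (consistent with $\beta_0P_{-1}=0$ in \eqref{three-term}) and $r_{0,k}=0$ for all $k$ (recorded in the preceding Proposition, where it follows from $P_{-1}\equiv0$ in the definition \eqref{r_n,k}), the right-hand side reduces to $\beta_n+\sum_{k=1}^{N}t_kr_{n,k}$. Comparing with \eqref{sum_alpha} yields $p(n,\vec{t}\,)=-\beta_n-\sum_{k=1}^{N}t_kr_{n,k}$, with $\beta_n$ given by \eqref{beta_n}, which is the claim.

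The argument is essentially routine; the only points that need a little care are the bookkeeping in the cancellation that turns \eqref{S2.1}--\eqref{S2.2} into the first-order relation for $\alpha_n$, and checking that the boundary contributions $\beta_0$ and $r_{0,k}$ vanish so that the telescoping closes (in particular the term $\beta_0R_{-1,k}$ appearing in \eqref{S2.2} at $n=0$ is harmless precisely because $\beta_0=0$). An alternative, slightly longer route would be to substitute the explicit formula \eqref{alpha_n} into \eqref{sum_alpha} and then identify $\sum_{j=0}^{n-1}R_{j,k}$ from the coefficients of $z^{-1}$ and $(z+t_k)^{-1}$ in the sum rule \eqref{$S_{2}'$} (i.e. in \eqref{LHS} versus \eqref{rhs}); the approach above avoids that bookkeeping.
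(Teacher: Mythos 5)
Your proposal is correct and follows essentially the same route as the paper: both eliminate $\beta_{n+1}R_{n+1,k}-\beta_nR_{n-1,k}$ between \eqref{S2.1} and \eqref{S2.2} to obtain $\alpha_n=\beta_{n+1}-\beta_n+\sum_{k=1}^{N}t_k(r_{n+1,k}-r_{n,k})$, telescope using $\beta_0=r_{0,k}=0$, and compare with \eqref{sum_alpha}. The extra remarks on the boundary terms are a minor elaboration, not a change of method.
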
	
\begin{proof}
	Substituting \eqref{S2.2} into \eqref{S2.1} to eliminate $\beta _{n+1}R_{n+1,k}-\beta _{n}R_{n-1,k}$, $k=1,\cdots ,N$ gives us
	\begin{align}
		\alpha _n=\beta _{n+1}-\beta _n+\sum_{k=1}^{N}t_k(r_{n+1,k}-r_{n,k}).\nonumber
	\end{align}
	Replacing $n$ by $j$ in the above equation and summing it over $j$ from $0$ to $n-1$,  in view of the fact that $\beta _0=r_{0,k}=0$, we arrive at
	\begin{align}
		\sum_{j=0}^{n-1}\alpha _j=\beta _n+\sum_{k=1}^{N}t_kr_{n,k}.\label{alpha-beta}
	\end{align}
Combining it with \eqref{sum_alpha}, we get \eqref{p(n,t)}.
\end{proof}

\section{Toda Equations, Riccati Equations and Generalized  Painlev\'{e}  V Equation}	

By differentiating the orthogonality relation \eqref{or} with $m=n$ and $m=n-1$, we get differential relations, which enable us to establish Toda equations for the recurrence coefficients.
Combining them with the equations obtained in the previous section, we further deduce the Riccati equations satisfied by the auxiliary quantities.
\begin{lemma}
	The relationships between the derivatives of  $\ln{h_n(\vec{t}\,)}$, $p(n,\vec{t}\,)$ and the auxiliary quantities $\{R_{n,k}, r_{n,k} \}$ are given as follows
	\begin{align}
		\frac{\partial }{\partial {t_k}} \ln{h_n(\vec{t}\,)}&=R_{n,k}(\vec{t}\,), \label{dr1}\\
		\frac{\partial}{\partial {t_k}} p(n,\vec{t}\, )&=-r_{n,k}(\vec{t}\,), \label{dr2}
	\end{align}
	for $k=1,\cdots ,N.$ According to \eqref{beta} and \eqref{alpha}, we find
	\begin{align}
	\frac{\partial }{\partial {t_k}} \ln{\beta _n}&=R_{n,k}(\vec{t}\,)-R_{n-1,k}(\vec{t}\,), \label{dr3} \\
	\frac{\partial}{\partial {t_k}} \alpha _n&=r_{n+1,k}(\vec{t}\,)-r_{n,k}(\vec{t}\,), \label{dr4}
	\end{align}
	for $k=1,\cdots ,N.$
\end{lemma}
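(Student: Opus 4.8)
The plan is to obtain \eqref{dr1} and \eqref{dr2} by differentiating the two orthogonality relations $\int_0^{+\infty} P_n^2(x;\vec{t}\,)w(x;\vec{t}\,)\,dx = h_n$ and $\int_0^{+\infty} P_n(x;\vec{t}\,)P_{n-1}(x;\vec{t}\,)w(x;\vec{t}\,)\,dx = 0$ with respect to $t_k$, and then to read off \eqref{dr3} and \eqref{dr4} as immediate consequences of the algebraic identities \eqref{beta} and \eqref{alpha} already in hand. Two elementary observations drive everything. First, from \eqref{v(x)} one has $\partial w(x;\vec{t}\,)/\partial t_k = \frac{\lambda_k}{x+t_k}\,w(x;\vec{t}\,)$. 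Second, since $P_n(x;\vec{t}\,)$ is monic of degree $n$ in $x$, the polynomial $\partial P_n/\partial t_k$ has degree at most $n-1$, and by \eqref{P_n(x)} its coefficient of $x^{n-1}$ equals $\partial p(n,\vec{t}\,)/\partial t_k$.

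For \eqref{dr1}, differentiating $h_n = \int_0^{+\infty}P_n^2 w\,dx$ produces $2\int_0^{+\infty}(\partial_{t_k}P_n)P_n w\,dx + \lambda_k\int_0^{+\infty}\frac{P_n^2}{x+t_k}w\,dx$; the first term vanishes by \eqref{or} because $\deg\partial_{t_k}P_n\le n-1$, and the second is $h_n R_{n,k}$ by the definition \eqref{R_n,k}, so dividing by $h_n$ gives the claim. For \eqref{dr2}, differentiating $\int_0^{+\infty}P_n P_{n-1}w\,dx = 0$ gives three terms: the term containing $\partial_{t_k}P_{n-1}$ drops out since $\deg\partial_{t_k}P_{n-1}\le n-2$, the term $\lambda_k\int_0^{+\infty}\frac{P_n P_{n-1}}{x+t_k}w\,dx$ equals $h_{n-1}r_{n,k}$ by \eqref{r_n,k}, and expanding $\partial_{t_k}P_n$ in the orthogonal basis $\{P_j\}_{j\le n-1}$ shows that $\int_0^{+\infty}(\partial_{t_k}P_n)P_{n-1}w\,dx$ equals $h_{n-1}$ times the $x^{n-1}$-coefficient of $\partial_{t_k}P_n$, i.e. $h_{n-1}\,\partial_{t_k}p(n,\vec{t}\,)$; collecting the three contributions yields $\partial_{t_k}p(n,\vec{t}\,) = -r_{n,k}$. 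Finally, \eqref{dr3} comes from applying $\partial_{t_k}\log(\cdot)$ to $\beta_n = h_n/h_{n-1}$ from \eqref{beta} and inserting \eqref{dr1}, and \eqref{dr4} from applying $\partial_{t_k}$ to $\alpha_n = p(n,\vec{t}\,)-p(n+1,\vec{t}\,)$ from \eqref{alpha} and inserting \eqref{dr2}.

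There is no serious obstacle here; the computation is routine once the two observations above are in place, and the only point needing a word of care is the legitimacy of differentiating under the integral sign together with the smooth dependence of $P_n$, $h_n$ and $p(n,\vec{t}\,)$ on $\vec{t}\,$. This is standard: for $\vec{t}\,$ ranging over a neighbourhood with all $t_k>0$, the weight $w(x;\vec{t}\,)=x^{\alpha}{\rm e}^{-x}\prod_k(x+t_k)^{\lambda_k}$ and each of its $t_k$-derivatives is dominated near $+\infty$ by an integrable exponential and is integrable near $0$, so all moments — and hence, through the determinantal formulas for $h_n$ and the coefficients of $P_n$ — are smooth in $\vec{t}\,$ and the interchange of $\partial_{t_k}$ with the integral is valid; I would dispose of this in a single sentence rather than belabour it.
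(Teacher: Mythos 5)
Your proposal is correct and follows essentially the same route as the paper: differentiate the $m=n$ and $m=n-1$ orthogonality relations, use $\partial w/\partial t_k=\frac{\lambda_k}{x+t_k}w$ and the fact that $\partial_{t_k}P_n$ has degree at most $n-1$ with leading coefficient $\partial_{t_k}p(n,\vec{t}\,)$, then read off \eqref{dr3}--\eqref{dr4} from \eqref{beta} and \eqref{alpha}. The only difference is your explicit (and welcome) remark on differentiating under the integral sign, which the paper leaves implicit.
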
	
\begin{proof}	
Making $m=n$ in \eqref{or}, we have
\begin{align*}
h_n(\vec{t}\, )=\int_{0}^{+\infty } P_{n}^{2}(x;\vec{t}\,)x^{\alpha}{\rm{e}}^{-x}\prod_{k=1}^{N}(x+t_k)^{\lambda_k}dx.
\end{align*}
Taking the derivative of $h_n(\vec{t}\,)$ with respect to $t_k$, we get
\begin{equation}
	\begin{aligned}
		\frac{\partial }{\partial {t_k}} h_n(\vec{t}\,)=&\int_{0}^{+\infty}2P_n(x;\vec{t}\,)\frac{\partial P_n(x;\vec{t}\,)}{\partial t_k}\cdot x^{\alpha}{\rm{e}}^{-x}\prod_{k=1}^{N}(x+t_k)^{\lambda_k}dx \\
		 &+\int_{0}^{+\infty}P_n^2(x;\vec{t}\,)x^{\alpha}{\rm{e}}^{-x}\cdot \frac{\partial}{\partial t_k}\prod_{j=1}^{N}(x+t_j)^{\lambda_j} dx. \label{h_n}	 
	\end{aligned}
\end{equation}
Applying $\frac{\partial }{\partial {t_k}}$ to $P_n(x;\vec{t}\,)$ yields
\begin{align*}
	\frac{\partial }{\partial {t_k}}P_n(x;\vec{t}\, )=\frac{\partial }{\partial {t_k}}(x ^n+p(n,\vec{t}\, )x^{n-1}+\cdots)=\frac{\partial }{\partial {t_k}}p(n,\vec{t}\, )\cdot P_{n-1}(x;\vec{t}\,)+\cdots.
\end{align*}
According to the orthogonality relation \eqref{or},  the first integral on the right hand side of \eqref{h_n}  is zero.
Noting that
\begin{align*}
\frac{\partial}{\partial t_k}\prod\limits_{j=1}^{N}(x+t_j)^{\lambda_j}=\frac{\lambda_k}{x+t_k}\prod\limits_{j=1}^{N}(x+t_j)^{\lambda_j},
\end{align*}
with the aid of \eqref{R_n,k}, hence \eqref{h_n} becomes
\begin{align}
\frac{\partial }{\partial {t_k}} h_n(\vec{t}\,)	 &=\lambda_k\int_{0}^{+\infty}\frac{P_{n}^{2}(x;\vec{t}\,)}{x+t_k}w(x;\vec{t}\,)dx \label{h_n1}  \\
&=h_n(\vec{t}\,)R_{n,k}(\vec{t}\,),\nonumber
\end{align}
which gives us \eqref{dr1}.

Making $m=n-1$ in \eqref{or}, in view of the orthogonality relation \eqref{or}, we get
\begin{align*}
0=\int_{0}^{+\infty } P_{n}(x;\vec{t}\,)P_{n-1}(x;\vec{t}\,)w(x;\vec{t}\,)dx.
\end{align*}
Differentiating it over $t_k$, via an argument similar to the one used to derive \eqref{h_n1}, we obtain
\begin{align*}
	0=&\int_{0}^{+\infty} \frac{\partial P_n(x;\vec{t}\,)}{\partial t_k}\cdot P_{n-1}(x;\vec{t}\,)w(x;\vec{t}\,)dx+\int_{0}^{+\infty}P_n(x;\vec{t}\,) \frac{\partial P_{n-1}(x;\vec{t}\,)}{\partial t_k}\cdot w(x;\vec{t}\,)dx \nonumber\\
	&+\int_{0}^{+\infty } P_{n}(x;\vec{t}\,)P_{n-1}(x;\vec{t}\,)\frac{\partial w(x;\vec{t}\,)}{\partial t_k}dx \nonumber\\
	=&h_{n-1}\frac{\partial }{\partial t_k}p(n,\vec{t}\,)+\lambda_k\int_{0}^{+\infty}\frac{P_{n}(x;\vec{t}\,)P_{n-1}(x;\vec{t}\,)}{x+t_k}w(x;\vec{t}\,)dx.
\end{align*}
According to the definition of $r_{n,k}$ given by \eqref{r_n,k}, we are led to \eqref{dr2}.
\end{proof}

\begin{proposition}
The recurrence coefficients satisfy the following Toda equations
\begin{align}
	&\delta \ln \beta_n=\alpha_{n-1}-\alpha_n+2,\label{te1}\\
	&(\delta-1) \alpha _n=\beta _{n}-\beta _{n+1}, \label{te2}
\end{align}
	where $\delta:=\sum\limits_{k=1}^{N}t_k \frac{\partial }{\partial {t_k}}.$
\end{proposition}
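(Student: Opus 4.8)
The plan is to derive both Toda equations by applying the Euler-type operator $\delta=\sum_{k=1}^{N}t_k\partial_{t_k}$ to the differential identities \eqref{dr3} and \eqref{dr4} and then eliminating the auxiliary quantities with the purely algebraic relations already in hand, namely \eqref{alpha_n} for \eqref{te1} and the combination of \eqref{S2.1}--\eqref{S2.2} for \eqref{te2}. No new analytic ingredient is needed: the differential relations of the preceding Lemma and the representation of $\alpha_n$ in terms of the $R_{n,k}$ already do all the work, and the proof is essentially a bookkeeping exercise.

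For \eqref{te1}, I would start from $\delta\ln\beta_n=\sum_{k=1}^{N}t_k\,\partial_{t_k}\ln\beta_n$ and substitute \eqref{dr3} to obtain $\delta\ln\beta_n=\sum_{k=1}^{N}t_kR_{n,k}-\sum_{k=1}^{N}t_kR_{n-1,k}$. Next I solve \eqref{alpha_n} for the weighted sum, $\sum_{k=1}^{N}t_kR_{n,k}=2n+1+\alpha+\sum_{k=1}^{N}\lambda_k-\alpha_n$, and write the same identity with $n$ replaced by $n-1$, giving $\sum_{k=1}^{N}t_kR_{n-1,k}=2n-1+\alpha+\sum_{k=1}^{N}\lambda_k-\alpha_{n-1}$. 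Subtracting, the $\alpha$ and $\sum_k\lambda_k$ contributions cancel, the difference of the integer terms is $2$, and what remains is $\delta\ln\beta_n=\alpha_{n-1}-\alpha_n+2$, which is exactly \eqref{te1}.

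For \eqref{te2}, I would likewise start from $\delta\alpha_n=\sum_{k=1}^{N}t_k\,\partial_{t_k}\alpha_n$ and use \eqref{dr4} to get $\delta\alpha_n=\sum_{k=1}^{N}t_k(r_{n+1,k}-r_{n,k})$. On the other hand, substituting \eqref{S2.2} into \eqref{S2.1} to eliminate $\beta_{n+1}R_{n+1,k}-\beta_nR_{n-1,k}$ yields $\alpha_n=\beta_{n+1}-\beta_n+\sum_{k=1}^{N}t_k(r_{n+1,k}-r_{n,k})$ --- precisely the intermediate identity already used in the proof of the formula for $p(n,\vec{t}\,)$. Combining the two displays gives $\delta\alpha_n=\alpha_n-\beta_{n+1}+\beta_n$, i.e. $(\delta-1)\alpha_n=\beta_n-\beta_{n+1}$, which is \eqref{te2}. (One could instead apply $\delta$ to \eqref{alpha-beta}, but that route is messier because $\delta$ does not interact cleanly with $\sum_k t_kr_{n,k}$, so I would favour the route through \eqref{S2.1}--\eqref{S2.2}.)

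There is no genuine obstacle here; the only point requiring care is handling the index shift $n\mapsto n-1$ correctly in \eqref{te1} and tracking the constants $2n+1$, $\alpha$, $\sum_k\lambda_k$ so that they telescope as claimed. All the substantive work --- establishing the differential relations \eqref{dr3}, \eqref{dr4} and the algebraic representations of $\alpha_n$ and $p(n,\vec{t}\,)$ --- has already been done, so the proof is short.
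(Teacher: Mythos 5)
Your proposal is correct and follows essentially the same route as the paper: multiply \eqref{dr3} and \eqref{dr4} by $t_k$, sum over $k$, and eliminate the sums $\sum_k t_kR_{n,k}$ via \eqref{alpha_n} for \eqref{te1} and the sums $\sum_k t_k(r_{n+1,k}-r_{n,k})$ via the consequence of \eqref{S2.1}--\eqref{S2.2} for \eqref{te2}. The only cosmetic difference is that for \eqref{te2} the paper invokes the telescoped identity \eqref{alpha-beta} and takes its difference at consecutive $n$, whereas you use the un-telescoped identity $\alpha_n=\beta_{n+1}-\beta_n+\sum_{k}t_k(r_{n+1,k}-r_{n,k})$ directly; these are equivalent.
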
	
\begin{proof}
Multiplying both sides of equation \eqref{dr3} by $t_k$ yields	
\begin{align*}
		t_k\frac{\partial }{\partial {t_k}} \ln{\beta _n}=t_k(R_{n,k}-R_{n-1,k}).
\end{align*}
Summing the above equation over $k$ from $1$ to $N$, we get
\begin{align*}
\delta \ln \beta_n=\sum_{k=1}^{N}t_kR_{n,k}-\sum_{k=1}^{N}t_kR_{n-1,k}.	 
\end{align*}
Eliminating the two summation terms by using \eqref{alpha_n} gives us \eqref{te1}.

Multiplying both sides of equation \eqref{dr4} by $t_k$ and summing over $k$ from $1$ to $N$, we have
\begin{align*}
\delta\alpha_n&=\sum_{k=1}^{N}t_kr_{n+1,k}-\sum_{k=1}^{N}t_kr_{n,k}\\
&=\alpha_{n}+\beta_{n}-\beta_{n+1},	
\end{align*}
where the second equality is due to \eqref{alpha-beta}. Hence we come to \eqref{te2}.
\end{proof}

\begin{theorem}
	The auxiliary quantities $\left\lbrace R_{n,k}, r_{n,k},k=1,\cdots ,N\right\rbrace $ satisfy the following Riccati equations
	\begin{align}
		\delta R_{n,k}=\biggl[ 2n+\alpha+t_k+\sum_{j=1}^{N}\biggl(\lambda_j-t_jR_{n,j}\biggr)\biggr ]R_{n,k}+2r_{n,k}-\lambda_k,\label{re1}
	\end{align}
	\begin{align}
			\delta r_{n,k}=\frac{r_{n,k}^2-\lambda_k r_{n,k}}{R_{n,k}}-\frac{R_{n,k}\biggl(n+\alpha+\sum\limits_{j=1}^{N}r_{n,j}\biggr) \biggl(n+\sum\limits_{j=1}^{N}r_{n,j}\biggr)}{1-\sum\limits _{j=1}^{N}R_{n,j}}-R_{n,k}\cdot\sum_{j=1}^{N}\frac{r_{n,j}^2-\lambda_jr_{n,j}}{R_{n,j}}, \label{re2}
	\end{align}
	for $k=1,\cdots,N$, where $\delta =\sum\limits_{j=1}^{N}t_j \frac{\partial }{\partial {t_j}}$.
\end{theorem}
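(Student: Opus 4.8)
The plan is to obtain each Riccati equation by differentiating one of the closed-form expressions for $\alpha_n$, $\beta_n$ and $p(n,\vec{t}\,)$ and feeding the result into the differential relations \eqref{dr1}--\eqref{dr4} together with the compatibility relations \eqref{S1.2} and \eqref{S2'2}. The structural observation used throughout is that, by \eqref{dr1} and \eqref{dr2}, $R_{n,k}=\partial_{t_k}\ln h_n$ and $r_{n,k}=-\partial_{t_k}p(n,\vec{t}\,)$, so the mixed second $t$-partials are symmetric: $\partial_{t_j}R_{n,k}=\partial_{t_j}\partial_{t_k}\ln h_n=\partial_{t_k}R_{n,j}$, and likewise $\partial_{t_j}r_{n,k}=\partial_{t_k}r_{n,j}$. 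Hence a sum of the form $\sum_{j=1}^{N}t_j\,\partial_{t_k}(\,\cdot_{n,j})$ that falls out of a naive differentiation can be rewritten as $\delta(\,\cdot_{n,k})$, which is precisely the left-hand side we want.

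For \eqref{re1}: differentiating \eqref{alpha_n} with respect to $t_k$ gives $\partial_{t_k}\alpha_n=-R_{n,k}-\sum_{j=1}^{N}t_j\,\partial_{t_k}R_{n,j}$. Equating this with \eqref{dr4}, i.e. $\partial_{t_k}\alpha_n=r_{n+1,k}-r_{n,k}$, and eliminating $r_{n+1,k}$ by \eqref{S1.2} isolates $\sum_{j=1}^{N}t_j\,\partial_{t_k}R_{n,j}=(t_k+\alpha_n-1)R_{n,k}+2r_{n,k}-\lambda_k$. By the symmetry of mixed partials the left side is $\delta R_{n,k}$; substituting the expression \eqref{alpha_n} into $t_k+\alpha_n-1$ then yields \eqref{re1}.

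For \eqref{re2}: differentiating the identity \eqref{p(n,t)}, that is $p(n,\vec{t}\,)=-\beta_n-\sum_{j=1}^{N}t_j r_{n,j}$, with respect to $t_k$ and comparing with \eqref{dr2}, all terms cancel except $\partial_{t_k}\beta_n+\sum_{j=1}^{N}t_j\,\partial_{t_k}r_{n,j}=0$, which by the mixed-partial symmetry says $\delta r_{n,k}=-\partial_{t_k}\beta_n$. Now \eqref{dr3} gives $\partial_{t_k}\beta_n=\beta_n\bigl(R_{n,k}-R_{n-1,k}\bigr)$, so $\delta r_{n,k}=\beta_n R_{n-1,k}-\beta_n R_{n,k}$. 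Replacing $\beta_n R_{n-1,k}$ by $\bigl(r_{n,k}^2-\lambda_k r_{n,k}\bigr)/R_{n,k}$ via \eqref{S2'2}, and the remaining $\beta_n$ in $\beta_n R_{n,k}$ by \eqref{beta_n}, and collecting terms gives \eqref{re2}.

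Both computations are short bookkeeping once this framework is in place. The only point needing genuine care --- the sole ``obstacle'' --- is justifying the symmetry step: one must recognize that $R_{n,k}$ and $r_{n,k}$ are respectively a second logarithmic $t$-derivative of $h_n$ and a first $t$-derivative of $p(n,\vec{t}\,)$, which is exactly what makes the interchange $\sum_j t_j\,\partial_{t_k}(\,\cdot_{n,j})=\delta(\,\cdot_{n,k})$ legitimate. It is also worth verifying that every relation invoked (\eqref{alpha_n}, \eqref{beta_n}, \eqref{p(n,t)}, \eqref{dr1}--\eqref{dr4}, \eqref{S1.2}, \eqref{S2'2}) holds for each index $k$ without recourse to a genericity assumption $t_j\neq t_k$.
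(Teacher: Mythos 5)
Your proposal is correct and follows essentially the same route as the paper: both prove the mixed-partial symmetries $\partial_{t_k}R_{n,j}=\partial_{t_j}R_{n,k}$ and $\partial_{t_k}r_{n,j}=\partial_{t_j}r_{n,k}$ from \eqref{dr1}--\eqref{dr2}, then obtain \eqref{re1} by differentiating \eqref{alpha_n} against \eqref{dr4} and eliminating $r_{n+1,k}$ via \eqref{S1.2}, and \eqref{re2} by differentiating \eqref{p(n,t)} against \eqref{dr2} to get $\delta r_{n,k}=-\partial_{t_k}\beta_n$ and then using \eqref{dr3}, \eqref{S2'2} and \eqref{beta_n}. (Only a cosmetic slip: $R_{n,k}$ is a \emph{first} $t$-derivative of $\ln h_n$, so the relevant symmetry is that of the second mixed partials of $\ln h_n$.)
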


\begin{proof}
First, we will make use of the following relations	
\begin{align}
	\frac{\partial}{\partial {t_k}} R_{n,j}
	&=\frac{\partial}{\partial {t_j}} R_{n,k}, \qquad k,j=1,\cdots,N,\label{re3}\\
	\frac{\partial}{\partial {t_k}} r_{n,j}
	&=\frac{\partial}{\partial {t_j}} r_{n,k}, \qquad k,j=1,\cdots,N,\label{re4}
\end{align}
where the first equation can be obtain by combining the fact that $\frac{\partial^2}{\partial t_jt_k}\ln{h_n} =\frac{\partial^2}{\partial t_kt_j}\ln{h_n}$ with \eqref{dr1}, and the second one is due to $\frac{\partial^2}{\partial t_jt_k}p(n,\vec{t}\, )=\frac{\partial^2}{\partial t_kt_j}p(n,\vec{t}\,)$ and \eqref{dr2}.

Inserting \eqref{alpha_n} into \eqref{dr4} leads us to
\begin{align*}
	r_{n+1,k}-r_{n,k}&=-\frac{\partial }{\partial t_k} \sum_{j=1}^{N}t_jR_{n,j}\\
	&=- R_{n,k}-\sum_{j=1}^{N}t_j\frac{\partial }{\partial t_k}R_{n,j}.
\end{align*}
Consequently, we have
\begin{align*}
	\sum_{j=1}^{N}t_j \frac{\partial }{\partial t_k}R_{n,j}=r_{n,k}-r_{n+1,k}-R_{n,k}.
\end{align*}
Using \eqref{S1.2} to eliminate $r_{n+1,k}$ in the above equation, in view of \eqref{re3}, we obtain
\begin{align*}
	\sum\limits_{j=1}^{N}t_j\frac{\partial }{\partial {t_j}}R_{n,k}=(t_k+\alpha_{n}-1)R_{n,k}+2r_{n,k}-\lambda_k.
\end{align*}
Getting rid of $\alpha_{n}$ in the above equation by using \eqref{alpha_n} yields \eqref{re1}.	
	
Substituting \eqref{p(n,t)} into \eqref{dr2}, we get
\begin{align*}
	r_{n,k}&=\frac{\partial }{\partial t_k}\beta_n+\frac{\partial }{\partial t_k} \sum_{j=1}^{N}t_jr_{n,j} \\
	&=\frac{\partial }{\partial t_k}\beta_n+r_{n,k}+\sum_{j=1}^{N}t_j\frac{\partial }{\partial t_k}r_{n,j},
\end{align*}
which, with the aid of \eqref{re4} gives us
\begin{align}
	\delta r_{n,k}&=-\frac{\partial }{\partial t_k}\beta_n\label{delta r} \\
	&=\beta_nR_{n-1,k}-\beta_nR_{n,k}\nonumber\\
	 &=\frac{r_{n,k}^2-\lambda_kr_{n,k}}{R_{n,k}}-\beta_nR_{n,k},\label{re5}
\end{align}
where the second equality is due to \eqref{dr3}, and the third one results from \eqref{S2'2}. Plugging \eqref{beta_n} into \eqref{re5}, we come to \eqref{re2}.		 
\end{proof}

At the end of this section, by solving $r_{n,k}$ from the first Riccati equation \eqref{re1} and plugging it into \eqref{re2}, in view of $\delta t_k=t_k$ and $\delta (t_kR_{n,k})=t_kR_{n,k}+t_k\cdot \delta R_{n,k}$, we construct the second order PDEs for $\{R_{n,k},k=1,\cdots,N\}$.

\begin{theorem}
The auxiliary quantities $\{R_{n,k},k=1,\cdots,N\}$ satisfy the following second order non-linear PDEs
\begin{equation}
	\begin{aligned}
\delta^2R_{n,k}=&t_kR_{n,k}-R_{n,k}\sum_{j=1}^{N}t_jR_{n,j}+\frac{(\delta R_{n,k})^2-\lambda_k^2}{2R_{n,k}}-R_{n,k}\sum_{j=1}^{N}\frac{(\delta R_{n,j})^2-\lambda_j^2}{2R_{n,j}}\\
&+\frac{1}{2}R_{n,k}\left(\biggl(t_k+\sum\limits_{j=1}^{N}\lambda_j\biggr)^2-\sum_{j=1}^{N}R_{n,j}\biggl(t_j+\sum_{k=1}^{N}\lambda_k\biggr)^2 \right)+R_{n,k}\sum_{j=1}^{N}\delta R_{n,j}\sum_{k=1}^{N}\lambda_k   \\
&+\biggl(2n+\alpha-\sum_{j=1}^{N}t_jR_{n,j}\biggr)t_kR_{n,k}	 -R_{n,k}\sum_{j=1}^{N}\biggl(2n+\alpha-\sum_{k=1}^{N}t_kR_{n,k} \biggr)t_jR_{n,j} \\
&+\frac{R_{n,k}}{2\biggl(\sum\limits_{j=1}^{N}R_{n,j}-1\biggr)}\left(\left(\sum\limits_{k=1}^{N}\delta R_{n,k}-\sum\limits_{j=1}^{N}\lambda_j\sum_{k=1}^{N}R_{n,k}+\sum_{k=1}^{N}\lambda_k\right)^2-\alpha^2\right), \label{PDEs}	
	\end{aligned}
\end{equation}
for $k=1,\cdots,N$, where $\delta =\sum\limits_{j=1}^{N}t_j \frac{\partial }{\partial {t_j}}$ and $\delta^2=\sum\limits_{j=1}^{N}t_j^2\frac{\partial^2}{\partial t_j^2}+2\sum\limits_{1\le k<j\le N}t_kt_j\frac{\partial^2}{\partial t_k\partial t_j} +\delta.$
\end{theorem}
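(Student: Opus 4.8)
The plan is to eliminate the quantities $r_{n,k}$ between the two Riccati systems of the previous theorem. Equation \eqref{re1} is linear in $r_{n,k}$, so I would first solve it explicitly,
\[ r_{n,k}=\tfrac12\Big(\lambda_k+\delta R_{n,k}-c_k R_{n,k}\Big),\qquad c_k:=2n+\alpha+t_k+\sum_{j=1}^N\big(\lambda_j-t_jR_{n,j}\big), \]
and record the companion expressions that the right-hand side of \eqref{re2} will need: $r_{n,k}-\lambda_k=\tfrac12(\delta R_{n,k}-c_kR_{n,k}-\lambda_k)$, hence
\[ \frac{r_{n,k}^2-\lambda_k r_{n,k}}{R_{n,k}}=\frac{(\delta R_{n,k}-c_kR_{n,k})^2-\lambda_k^2}{4R_{n,k}}, \]
together with $\sum_{j}r_{n,j}$, the sum $\sum_j(r_{n,j}^2-\lambda_jr_{n,j})/R_{n,j}$, and $\beta_n$ rewritten via \eqref{beta_n}, all of which are now rational in $\{R_{n,j},\delta R_{n,j}\}$ only.

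Next I would compute $\delta r_{n,k}$ by applying $\delta$ to the first displayed formula, using $\delta t_j=t_j$, the Leibniz rule $\delta(fg)=(\delta f)g+f(\delta g)$, and the elementary fact that $\delta(\delta R_{n,k})$ coincides with the operator $\delta^2$ of the statement (a one-line verification from the product rule and the symmetry of mixed partials of $R_{n,k}$). This gives
\[ \delta r_{n,k}=\tfrac12\Big(\delta^2R_{n,k}-(\delta c_k)R_{n,k}-c_k\,\delta R_{n,k}\Big),\qquad \delta c_k=t_k-\sum_{j=1}^N\big(t_jR_{n,j}+t_j\,\delta R_{n,j}\big). \]

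Then I would equate this expression for $\delta r_{n,k}$ with the right-hand side of \eqref{re2} after the substitutions of the first step, and solve the resulting scalar identity for $\delta^2R_{n,k}$ (isolate and multiply through by $2$). What remains is purely algebraic: expand the squares, collect the single sums over $j$ and the terms of the form $R_{n,k}\cdot(\text{sum over }j)$, and complete the squares
\[ \Big(n+\alpha+\sum_j r_{n,j}\Big)\Big(n+\sum_j r_{n,j}\Big)=\Big(n+\tfrac{\alpha}{2}+\sum_j r_{n,j}\Big)^2-\tfrac{\alpha^2}{4}, \]
and similarly in the places that must produce $(\,\cdots)^2-\lambda_j^2$ and $(\,\cdots)^2-\alpha^2$, so that the outcome matches \eqref{PDEs} term by term.

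The hard part will not be conceptual but the bookkeeping in this last step. Equation \eqref{PDEs} is an assembly of many parallel pieces — a bare linear term, an $R_{n,k}$ times a sum, the $(\delta R_{n,k})^2/R_{n,k}$ term minus its summed counterpart, and two separate completions of the square, one of them sitting over the factor $1/\big(\sum_jR_{n,j}-1\big)$ coming from $\beta_n$ — and one has to check that every cross term generated by $\delta(c_kR_{n,k})$ and by that $\beta_n$ denominator either cancels or regroups exactly. In particular I would keep track of the numerical coefficients: the overall $\tfrac12$ in $r_{n,k}$ and in $\delta r_{n,k}$, and the way the $1/4$ arising from the product $r_{n,k}(r_{n,k}-\lambda_k)$ turns into the $1/(2R_{n,k})$ appearing in \eqref{PDEs} only after the whole identity has been multiplied by $2$.
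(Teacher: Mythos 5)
Your proposal is correct and follows exactly the route the paper itself indicates: solve the first Riccati equation \eqref{re1} for $r_{n,k}$, apply $\delta$ (using $\delta t_k=t_k$, $\delta(t_kR_{n,k})=t_kR_{n,k}+t_k\,\delta R_{n,k}$ and $\delta\circ\delta=\delta^2$), substitute into \eqref{re2} with $\beta_n$ from \eqref{beta_n}, and solve for $\delta^2R_{n,k}$. The paper gives no more detail than this one-line recipe, so your additional bookkeeping of the intermediate expressions and the completions of the square is a faithful, slightly more explicit version of the same argument.
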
	
\begin{remark}
When $N=1$, the weight function $w(x;\vec{t}\,)$ given by \eqref{w(x)} now becomes $x^{\alpha}{\rm{e}}^{-x}(x+t_1)^{\lambda_1}$, and the corresponding Hankel determinant $D_n(t_1)$ was studied in \cite{ChenMckay}. The system \eqref{PDEs}  is simplified to
\begin{align*}
t_1^2R_{n,1}''+t_1R_{n,1}'=&t_1R_{n,1}-t_1R_{n,1}^2+\frac{1-R_{n,1}}{2R_{n,1}}\biggl( (t_1R_{n,1}')^2-\lambda_1^2 \biggr)+\frac{1}{2}(t_1+\lambda_1)^2R_{n,1}(1-R_{n,1}) \\
&+t_1(2n+\alpha-t_1R_{n,1})R_{n,1}(1-R_{n,1})+\lambda_1t_1R_{n,1}R_{n,1}'\\
&+\frac{R_{n,1}}{2(R_{n,1}-1)}\biggl((t_1R_{n,1}'-\lambda_1R_{n,1}+\lambda_1)^2-\alpha^2 \biggr).	
\end{align*}
Let $y_n:=\frac{R_{n,1}}{R_{n,1}-1}$, the above equation is transformed into
\begin{align*}
	y_n''=&\biggl(\frac{1}{2y_n}+\frac{1}{y_n-1} \biggr)(y_n')^2-\frac{y_n'}{t_1}+\frac{(y_n-1)^2}{t_1^2}\biggl(\frac{\alpha^2y_n}{2}-\frac{\lambda_1^2}{2y_n} \biggr)\\
	 &+(2n+\alpha+1+\lambda_1)\frac{y_n}{t_1}-\frac{1}{2}\frac{y_n(y_n+1)}{y_n-1},
\end{align*}
which is $P_V(\frac{\alpha^2}{2}, -\frac{\lambda_1^2}{2}, 2n+\alpha+1+\lambda_1, -\frac{1}{2})$ \cite[(C.41)]{Jimbo}. The second order ODEs for $R_{n,1}$ and $y_n$ that we presented above were not given in \cite{ChenMckay}.
\end{remark}

\section{Generalized $\sigma$-form of Painlev\'{e} V Equation} \label{section5}

In this section, we study the logarithmic derivative of the Hankel determinant, i.e.
\begin{align}
	\sigma_n(\vec{t}\,):=\delta \ln{D_n(\vec{t}\,)}, \label{sigma_n:}
\end{align}
where $\delta =\sum\limits_{j=1}^{N}t_j \frac{\partial }{\partial {t_j}}$.
We aim to establish the second order PDE for $\sigma_n(\vec{t}\,)$.

Inserting \eqref{D_n} i.e.
$D_n(\vec{t} \,)=\prod\limits_{k=0}^{n-1} h_k(\vec{t} \,)$ into \eqref{sigma_n:}, in view of \eqref{dr1} and \eqref{alpha_n}, we get
\begin{align*}
	 \sigma_n(\vec{t}\,)&=\sum_{k=0}^{n-1}\sum_{j=1}^{N}t_j\frac{\partial}{\partial t_j}\ln h_k\\
	&=\sum_{k=0}^{n-1}\sum_{j=1}^{N}t_jR_{k,j}\\
	&=-\sum_{k=0}^{n-1}\alpha_k+n\biggl(n+\alpha +\sum_{k=1}^{N}\lambda_k\biggr),
\end{align*}
which combined with the fact that $\sum\limits_{k=0}^{n-1}\alpha_k(\vec{t}\,)=-p(n,\vec{t}\,)$ (i.e. formula \eqref{sum_alpha}) gives us
\begin{align}
	\sigma_n(\vec{t}\,)=p(n,\vec{t}\,)+n\biggl(n+\alpha +\sum_{k=1}^{N}\lambda_k\biggr).\label{sigma_n}
\end{align}

We will use this relationship together with the conclusions established in the previous section to deduce the PDE satisfied by $\sigma_n(\vec{t}\,)$.
To achieve our goal, we first get the expression of $\sigma_n$ in terms of $\{R_{n,k}, r_{n,k}\}$.
Then, in turn, we express $\{R_{n,k}, r_{n,k}\}$ by $\sigma_n$ and its derivatives.

\begin{theorem}\label{theorem4.1}
$({\rm i})$
$\sigma_n(\vec{t}\,)$ is expressed in terms of the auxiliary quantities $\{R_{n,k}, r_{n,k}, k=1,\cdots,N\}$ by
	\begin{align}
		\sigma_n=n \biggl(n+\alpha +\sum_{k=1}^{N}\lambda_k \biggr)-\sum_{k=1}^{N}t_kr_{n,k}-\beta _n,\label{sigma-rR}
	\end{align}
where $\beta_n$ is given by \eqref{beta_n}.

$({\rm ii})$
The auxiliary quantities $\{R_{n,k}, r_{n,k}, k=1,\cdots,N\}$ are expressed in terms of $\sigma_n(\vec{t}\,)$ and its derivatives by
\begin{align}
	r_{n,k}=-\frac{\partial \sigma_n}{\partial{t_k}} ,\label{r-sigma}
\end{align}
\begin{align}
	R_{n,k}=\frac{-\sum\limits _{j=1}^{N} t_j\frac{\partial ^2\sigma_n}{\partial t_j\partial t_k}+sgn(R_{n,k}+R_{n-1,k})\sqrt{\Delta_k}}{2\biggl(-\sigma_n+\sum\limits_{k=1}^{N} t_k\frac{\partial \sigma_n }{\partial {t_k}} +n \biggl(n+\alpha +\sum\limits_{k=1}^{N}\lambda_k \biggr)\biggr)}, \label{R-sigma}
\end{align}
for $k=1,\cdots,N$, where sgn$(R_{n,k}+R_{n-1,k})$ is the sign function of $R_{n,k}+R_{n-1,k}$ which is $-1$ for $R_{n,k}+R_{n-1,k}<0$, $1$ for $R_{n,k}+R_{n-1,k}>0$ and $0$ for $R_{n,k}+R_{n-1,k}=0$.
Here $\Delta_k$ is defined by
\begin{align}
	\Delta_k=\biggl( \sum\limits _{j=1}^{N} t_j\frac{\partial ^2\sigma_n}{\partial t_j\partial t_k} \biggr)^2+4\biggl(-\sigma_n+\sum\limits_{k=1}^{N} t_k\frac{\partial \sigma_n }{\partial {t_k}} +n \biggl(n+\alpha +\sum\limits_{k=1}^{N}\lambda_k \biggr)\biggr)\frac{\partial\sigma_n}{\partial t_k} \biggl(\frac{\partial\sigma_n}{\partial t_k}+\lambda_k \biggr). \label{delta_k}
\end{align}	
\end{theorem}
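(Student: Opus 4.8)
For part (i), the plan is to start from \eqref{sigma_n}, i.e.\ $\sigma_n = p(n,\vec t\,) + n(n+\alpha+\sum_k\lambda_k)$, and simply substitute the formula \eqref{p(n,t)} for $p(n,\vec t\,)$ obtained in Section~2, namely $p(n,\vec t\,) = -\beta_n - \sum_{k=1}^N t_k r_{n,k}$. This is an immediate one-line substitution and yields \eqref{sigma-rR}, with $\beta_n$ given by \eqref{beta_n}. No obstacle here; the content is entirely in the earlier lemmas.

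For part (ii), the identity \eqref{r-sigma} follows at once by differentiating \eqref{sigma_n} with respect to $t_k$ and using the differential relation \eqref{dr2}, $\partial p(n,\vec t\,)/\partial t_k = -r_{n,k}$, since the term $n(n+\alpha+\sum_j\lambda_j)$ is $t$-independent. The real work is \eqref{R-sigma}. The strategy is to treat $R_{n,k}$ as the unknown in a quadratic equation whose coefficients are expressible through $\sigma_n$ and its derivatives. First I would apply $\delta = \sum_j t_j\partial_{t_j}$ to \eqref{sigma-rR} and combine with \eqref{r-sigma} and the Toda-type relation $\delta r_{n,k} = -\partial\beta_n/\partial t_k$ (equation \eqref{delta r}) together with $\partial_{t_k}\beta_n = \beta_n(R_{n,k}-R_{n-1,k})$ from \eqref{dr3} and $\delta r_{n,k} = (r_{n,k}^2-\lambda_k r_{n,k})/R_{n,k} - \beta_n R_{n,k}$ from \eqref{re5}; this should let me solve for $\beta_n$ in terms of $\sigma_n$ and $\delta\sigma_n$. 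Concretely, from \eqref{sigma-rR}, $\beta_n = n(n+\alpha+\sum_k\lambda_k) - \sigma_n - \sum_k t_k r_{n,k} = n(n+\alpha+\sum_k\lambda_k) - \sigma_n + \sum_k t_k\partial_{t_k}\sigma_n$, where the last step uses \eqref{r-sigma}. This identifies the denominator appearing in \eqref{R-sigma} as exactly $2\beta_n$.

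It then remains to produce a quadratic in $R_{n,k}$. The plan is to use \eqref{re5} in the form $\beta_n R_{n,k}^2 + (\delta r_{n,k}) R_{n,k} - (r_{n,k}^2 - \lambda_k r_{n,k}) = 0$, and to re-express each ingredient via $\sigma_n$: we have $r_{n,k} = -\partial_{t_k}\sigma_n$ by \eqref{r-sigma}, hence $r_{n,k}^2 - \lambda_k r_{n,k} = \partial_{t_k}\sigma_n(\partial_{t_k}\sigma_n + \lambda_k)$; also $\delta r_{n,k} = -\delta\partial_{t_k}\sigma_n = -\sum_j t_j \partial^2\sigma_n/(\partial t_j\partial t_k)$ by applying $\delta$ to \eqref{r-sigma} (using that $\partial_{t_k}$ and $\partial_{t_j}$ commute on $\sigma_n$); and $\beta_n$ is the denominator expression found above. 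Solving this quadratic by the usual formula gives \eqref{R-sigma}, with $\Delta_k$ the discriminant \eqref{delta_k} and the sign ambiguity resolved by the sign of $R_{n,k}+R_{n-1,k}$. The one point requiring care — and the main obstacle — is justifying the sign choice: one must check that the coefficient $\delta r_{n,k}$ of the linear term, which equals $\beta_n(R_{n-1,k}-R_{n,k})$ by \eqref{delta r}–\eqref{dr3}, forces the root to be the one with $\mathrm{sgn}(R_{n,k}+R_{n-1,k})$ in front of $\sqrt{\Delta_k}$; this comes from writing $-\delta r_{n,k} = \beta_n(R_{n,k}-R_{n-1,k})$ and $r_{n,k}^2-\lambda_k r_{n,k} = \beta_n R_{n,k}R_{n-1,k}$ (from \eqref{S2'2}), so that $R_{n,k}$ and $R_{n-1,k}$ are the two roots of $\beta_n X^2 - \beta_n(R_{n,k}+R_{n-1,k})X + \beta_n R_{n,k}R_{n-1,k}=0$, whence $2\beta_n R_{n,k} - (\text{sum of roots})\beta_n = \pm\beta_n\sqrt{(R_{n,k}+R_{n-1,k})^2 - 4R_{n,k}R_{n-1,k}} = \pm\beta_n|R_{n,k}-R_{n-1,k}|$, and matching signs gives precisely the stated $\mathrm{sgn}(R_{n,k}+R_{n-1,k})$ factor once $\Delta_k$ is re-expressed through $\sigma_n$. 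Everything else is bookkeeping with the relations already proved.
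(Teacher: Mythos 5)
Your overall route is the same as the paper's: part (i) is the substitution of \eqref{p(n,t)} into \eqref{sigma_n}; the identity \eqref{r-sigma} follows from \eqref{dr2}; and \eqref{R-sigma} is obtained by viewing \eqref{re5} as the quadratic $\beta_n R_{n,k}^2+\delta r_{n,k}\cdot R_{n,k}-r_{n,k}(r_{n,k}-\lambda_k)=0$ in $R_{n,k}$, rewriting $\beta_n$, $r_{n,k}$ and $\delta r_{n,k}$ through $\sigma_n$ via \eqref{sigma-rR} and \eqref{r-sigma}, and solving. Your identification of the denominator as $2\beta_n$ and of $\Delta_k$ as the discriminant of this quadratic is exactly what the paper does.

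However, your resolution of the sign --- which you correctly single out as the only delicate point --- is wrong as written. Substituting $\delta r_{n,k}=\beta_n(R_{n-1,k}-R_{n,k})$ (from \eqref{delta r} and \eqref{dr3}) and $r_{n,k}^2-\lambda_k r_{n,k}=\beta_n R_{n,k}R_{n-1,k}$ (from \eqref{S2'2}) into the quadratic above gives $X^2+(R_{n-1,k}-R_{n,k})X-R_{n,k}R_{n-1,k}=0$, whose roots are $R_{n,k}$ and $-R_{n-1,k}$ --- not $R_{n,k}$ and $R_{n-1,k}$ as you claim. Correspondingly the discriminant is $\Delta_k=\beta_n^2(R_{n,k}+R_{n-1,k})^2$, not $\beta_n^2(R_{n,k}-R_{n-1,k})^2$; this is precisely what makes $\mathrm{sgn}(R_{n,k}+R_{n-1,k})$ appear. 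With the quadratic you wrote down (roots $R_{n,k}$, $R_{n-1,k}$, discriminant $(R_{n,k}-R_{n-1,k})^2$) the selection rule would come out as $\mathrm{sgn}(R_{n,k}-R_{n-1,k})$, contradicting the statement, and your discriminant would not coincide with the $\Delta_k$ of \eqref{delta_k}. The fix is one line and is the paper's argument: since $-\delta r_{n,k}=\beta_n(R_{n,k}-R_{n-1,k})$, the root equal to $R_{n,k}$ satisfies $2\beta_nR_{n,k}+\delta r_{n,k}=\beta_n(R_{n,k}+R_{n-1,k})=\pm\sqrt{\Delta_k}$, so the correct branch carries the factor $\mathrm{sgn}(R_{n,k}+R_{n-1,k})$.
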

\begin{proof}
Inserting the expression \eqref{p(n,t)} into \eqref{sigma_n} gives us \eqref{sigma-rR}.
Differentiating both side of  \eqref{sigma_n} with respect to $t_k$, in view of \eqref{dr2}, we arrive at \eqref{r-sigma}.
	
To achieve \eqref{R-sigma}, we rewrite \eqref{re5} as a quadratic algebraic equation in $R_{n,k}$
\begin{align}
	\beta_nR_{n,k}^2 +\delta r_{n,k}\cdot R_{n,k}-r_{n,k}(r_{n,k}-\lambda_k)=0.\label{qae}
\end{align}
Now we look at its discriminant
\begin{align}
	\Delta_k=(\delta r_{n,k})^2+4\beta _nr_{n,k}(r_{n,k}-\lambda_k). \label{delta}
\end{align}
Replacing $r_{n,k}(r_{n,k}-\lambda_k)$ in the above equation by $\beta_nR_{n,k}^2+\delta r_{n,k}\cdot R_{n,k}$, which is due to \eqref{re5}, we obtain
\begin{align*}
	\Delta_k=(\delta r_{n,k}+2\beta_{n}R_{n,k})^2
	\ge 0
\end{align*}
Hence, equation \eqref{qae} has two real roots given by
\begin{align}
	R_{n,k}=\frac{-\delta r_{n,k}\pm \sqrt{\Delta _k} }{2\beta _n}.\label{root}
\end{align}
Getting rid of the term $\delta r_{n,k}$ in the above formula by using \eqref{delta r}, with the aid of \eqref{dr3}, we come to
$$R_{n,k}+R_{n-1,k}=\pm \frac{ \sqrt{\Delta_k} }{2\beta _n}.$$
Hence the sign function before $\sqrt{\Delta_k}$ is determined by sgn$(R_{n,k}+R_{n-1,k})$.  Therefore \eqref{root} becomes
\begin{align}
	R_{n,k}=\frac{-\delta r_{n,k}+sgn(R_{n,k}+R_{n-1,k}) \sqrt{\Delta_k} }{2\beta _n}.\label{root1}
\end{align}
Substituting \eqref{r-sigma} into \eqref{sigma-rR},  we are able to express $\beta_n$ in terms of $\sigma_n$ and its derivative by
\begin{align}
	\beta_n=-\sigma_n+\sum\limits_{k=1}^{N} t_k\frac{\partial \sigma_n }{\partial {t_k}}+n \biggl(n+\alpha +\sum_{k=1}^{N}\lambda_k \biggr). \label{beta-sigma}
 \end{align}
Inserting it and \eqref{r-sigma} into \eqref{root1} and \eqref{delta}, we obtain \eqref{R-sigma} and \eqref{delta_k}.
\end{proof}

Now, we go ahead to derive the second order PDE for $\sigma_n$.
To achieve this, we first simplify \eqref{sigma-rR} and then
substitute \eqref{r-sigma}-\eqref{R-sigma} into the resulting equation.

\begin{theorem}
$\sigma_n(\vec{t}\,)$ satisfies the following second order PDE
\begin{equation}
	\begin{aligned}
	&\left[2\left(\sum\limits_{k=1}^{N} t_k\frac{\partial \sigma_n }{\partial {t_k}}-\sigma_n+n\biggl(n+\alpha+\sum\limits_{k=1}^{N}\lambda_k\biggr) \right)-\sum\limits_{k=1}^{N}sgn(R_{n,k}+R_{n-1,k})\sqrt{\Delta_k}  \right]^2 \\
	&-4\left(\sum\limits_{k=1}^{N} t_k\frac{\partial \sigma_n }{\partial {t_k}}-\sigma_n+n\biggl(n+\alpha+\sum\limits_{k=1}^{N}\lambda_k\biggr) \right)\left(n+\alpha-\sum\limits_{k=1}^{N}\frac{\partial\sigma_n}{\partial t_k} \right)\left(n-\sum\limits_{k=1}^{N}\frac{\partial\sigma_n}{\partial t_k} \right)\\
	&-\left(\sum\limits_{k=1}^{N}\sum\limits _{j=1}^{N} t_j\frac{\partial ^2\sigma_n}{\partial t_k\partial t_j} \right)^2=0, \label{sigma-pde}
	\end{aligned}
\end{equation}
where $\Delta_k$ is given by \eqref{delta_k}.
\end{theorem}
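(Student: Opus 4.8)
The plan is to eliminate every occurrence of the auxiliary quantities $\{R_{n,k},r_{n,k}\}$ by playing the two available formulas for $\beta_n$ — the ``auxiliary'' one \eqref{beta_n} and the ``$\sigma$'' one \eqref{beta-sigma} — against each other. Throughout write $\mathcal{R}_n:=\sum_{k=1}^N R_{n,k}$, $\rho_n:=\sum_{k=1}^N r_{n,k}$, and recall $\delta=\sum_{j=1}^N t_j\frac{\partial}{\partial t_j}$.

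First I would turn \eqref{beta_n} into a quadratic in the single scalar $1-\mathcal{R}_n$. Summing the Riccati equation \eqref{re5} over $k$ gives $\delta\rho_n=\sum_{k=1}^N\frac{r_{n,k}^2-\lambda_k r_{n,k}}{R_{n,k}}-\beta_n\mathcal{R}_n$; substituting this into \eqref{beta_n} to remove the individually $k$-dependent term and clearing the denominator $1-\mathcal{R}_n$ produces
\[
\beta_n(1-\mathcal{R}_n)^2-(\delta\rho_n)(1-\mathcal{R}_n)-(n+\alpha+\rho_n)(n+\rho_n)=0 ,
\]
whence $2\beta_n(1-\mathcal{R}_n)-\delta\rho_n=\pm\sqrt{(\delta\rho_n)^2+4\beta_n(n+\alpha+\rho_n)(n+\rho_n)}$.

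Next I would evaluate the same combination $2\beta_n(1-\mathcal{R}_n)-\delta\rho_n$ directly from \eqref{R-sigma} (equivalently \eqref{root1}): summing $R_{n,k}=\big(-\delta r_{n,k}+\mathrm{sgn}(R_{n,k}+R_{n-1,k})\sqrt{\Delta_k}\big)/(2\beta_n)$ over $k$ gives $2\beta_n\mathcal{R}_n=-\delta\rho_n+\sum_{k=1}^N\mathrm{sgn}(R_{n,k}+R_{n-1,k})\sqrt{\Delta_k}$, hence $2\beta_n(1-\mathcal{R}_n)-\delta\rho_n=2\beta_n-\sum_{k=1}^N\mathrm{sgn}(R_{n,k}+R_{n-1,k})\sqrt{\Delta_k}$. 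Equating the two expressions for this combination and squaring away the $\pm$ gives
\[
\Big(2\beta_n-\sum_{k=1}^N\mathrm{sgn}(R_{n,k}+R_{n-1,k})\sqrt{\Delta_k}\Big)^2=(\delta\rho_n)^2+4\beta_n(n+\alpha+\rho_n)(n+\rho_n) .
\]
It remains only to pass to $\sigma_n$: by \eqref{r-sigma}, $r_{n,k}=-\frac{\partial\sigma_n}{\partial t_k}$, so $\rho_n=-\sum_k\frac{\partial\sigma_n}{\partial t_k}$ and $\delta\rho_n=-\sum_k\sum_j t_j\frac{\partial^2\sigma_n}{\partial t_k\partial t_j}$; by \eqref{beta-sigma}, $\beta_n=\sum_k t_k\frac{\partial\sigma_n}{\partial t_k}-\sigma_n+n(n+\alpha+\sum_k\lambda_k)$; and $\Delta_k$ is already in $\sigma_n$-form via \eqref{delta_k}. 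Substituting these into the last display yields exactly \eqref{sigma-pde}.

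I expect the main obstacle to be the first step. The sum $\sum_k(r_{n,k}^2-\lambda_k r_{n,k})/R_{n,k}$ in \eqref{beta_n} couples the $N$ variables inhomogeneously, and it is not obvious a priori that \eqref{beta_n} collapses into an honest quadratic in the single variable $1-\mathcal{R}_n$; the identity obtained by summing \eqref{re5} is precisely what makes this collapse occur. The only other delicate point is the branch bookkeeping — both the quadratic above and \eqref{R-sigma} carry a sign, and one must check that it is exactly the combination $2\beta_n(1-\mathcal{R}_n)-\delta\rho_n$ that appears on both sides, since this is what legitimizes the final squaring and is why the explicit $\mathrm{sgn}(R_{n,k}+R_{n-1,k})$ factors persist in \eqref{sigma-pde}. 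Everything past the first step is routine substitution.
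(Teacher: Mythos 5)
Your proposal is correct and uses the same essential ingredients and the same key intermediate identity $\bigl(2\beta_n-\sum_{k}\mathrm{sgn}(R_{n,k}+R_{n-1,k})\sqrt{\Delta_k}\bigr)^2-(\delta\rho_n)^2-4\beta_n(n+\alpha+\rho_n)(n+\rho_n)=0$ as the paper, which then substitutes \eqref{r-sigma} and \eqref{beta-sigma} exactly as you do. The only difference is organizational: the paper reaches that identity by substituting \eqref{beta_n} into the $\sigma_n$-expression \eqref{sigma-rR} and simplifying, whereas you obtain it directly by turning \eqref{beta_n} into an explicit quadratic in $1-\mathcal{R}_n$ via the summed Riccati equation \eqref{re5} — a somewhat cleaner bookkeeping of the same computation.
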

\begin{proof}
Plugging \eqref{beta_n} into \eqref{sigma-rR}, we get
\begin{align}
\sigma_n=n \biggl(n+\alpha +\sum_{k=1}^{N}\lambda_k \biggr)-\sum_{k=1}^{N}t_kr_{n,k}-\frac{\left(n+\alpha+\sum\limits_{k=1}^{N}r_{n,k}\right) \left(n+\sum\limits_{k=1}^{N}r_{n,k}\right)}{1-\sum\limits_{k=1}^{N}R_{n,k}} -\sum\limits_{k=1}^{N}\frac{r_{n,k}^2-\lambda_k r_{n,k}}{R_{n,k}}. \label{sigma-1}	
\end{align}
We first look at the last term on the right hand side of the above equation. According to \eqref{re5}, we have
\begin{align}
	\frac{r_{n,k}^2-\lambda_k r_{n,k}}{R_{n,k}}&=\delta r_{n,k}+\beta_nR_{n,k} \nonumber\\
	&=\frac{1}{2}\delta r_{n,k}+\frac{1}{2}sgn(R_{n,k}+R_{n-1,k}) \sqrt{\Delta_k}, \label{last term}
\end{align}
where the last equality is due to \eqref{root1}. Substituting \eqref{last term} back into \eqref{sigma-1}, and replacing $R_{n,k}$ by using \eqref{root1}, after simplification,
we get
\begin{align*}
	&\biggl(2\beta_n+\sum\limits_{k=1}^{N}\biggl(\delta r_{n,k}-sgn(R_{n,k}+R_{n-1,k})\sqrt{\Delta_k} \biggr)\biggr)\sigma_n\\
	=&\biggl(2\beta_n+\sum\limits_{k=1}^{N}\biggl(\delta r_{n,k}-sgn(R_{n,k}+R_{n-1,k})\sqrt{\Delta_k} \biggr) \biggr) \biggl(n\biggl(n+\alpha+\sum\limits_{k=1}^{N}\lambda_k\biggr)-\sum_{k=1}^{N}t_kr_{n,k} \biggr)\\
	&-\left(\beta_n+\frac{1}{2} \sum\limits_{k=1}^{N}\biggl(\delta r_{n,k}-sgn(R_{n,k}+R_{n-1,k})\sqrt{\Delta_k} \biggr) \right)\sum\limits_{k=1}^{N}\biggl(\delta r_{n,k}+sgn(R_{n,k}+R_{n-1,k}) \sqrt{\Delta_k}\biggr)\\
	 &-2\beta_n\left(n+\alpha+\sum\limits_{k=1}^{N}r_{n,k}\right) \left(n+\sum\limits_{k=1}^{N}r_{n,k}\right).
\end{align*}
Eliminating $\sigma_n$ in the above equation by using \eqref{sigma-rR}, after simplification, we are led to
\begin{align*}
	 \biggl(2\beta_n-\sum\limits_{k=1}^{N}sgn(R_{n,k}+R_{n-1,k})\sqrt{\Delta_k} \biggr)^2-\biggl(\sum\limits_{k=1}^{N}\delta r_{n,k} \biggr)^2&\\
	 -4\beta_n\biggl(n+\alpha+\sum\limits_{k=1}^{N}r_{n,k} \biggr)\biggl(n+\sum\limits_{k=1}^{N}r_{n,k} \biggr)&=0.
\end{align*}
Substituting \eqref{r-sigma} and \eqref{beta-sigma} into the above equation, we arrive at \eqref{sigma-pde}.
\end{proof}

\begin{remark}
When $N=1$, with $t_1$ replaced by $t$ and $\lambda_1$ by $\lambda$, equation \eqref{sigma-pde} is reduced to
\begin{align*}
	 (t\sigma_n'-\sigma_n+n\lambda+(2n+\alpha+\lambda)\sigma_n')^2=(t\sigma_n'')^2+4(t\sigma_n'-\sigma_n+n(n+\alpha+\lambda))((\sigma_n')^2+\lambda \sigma_n'),
\end{align*}
	which agrees with $(69)$ of \cite{ChenMckay}.
Let $H_n(t):=\sigma_n(t)-n\lambda$, then $H_n$ satisfies the $\sigma$-form of a Painlev\'{e} V equation  of \cite[(C.45)]{Jimbo} with $\nu_0=0$, $\nu_1=\lambda$, $\nu_2=-n$ and $\nu_3=-n-\alpha$, i.e.
\begin{align*}
	 (tH_n'')^2=(H_n-tH_n'+2(H_n')^2+(\lambda-2n-\alpha)H_n' )^2-4H_n'(\lambda+H_n')(-n+H_n')(-n-\alpha+H_n').
\end{align*}	
\end{remark}

\begin{remark}
When $N=2$, the PDE \eqref{sigma-pde} is reduced to
\begin{align*}
	\sigma_n=&-2\biggl(\frac{\partial\sigma_n}{\partial t_1}\biggr) \biggl(\frac{\partial\sigma_n}{\partial t_2} \biggr)+(2n+\alpha+t_1+\lambda_1)\frac{\partial\sigma_n}{\partial t_1}+ (2n+\alpha+t_2+\lambda_2)\frac{\partial\sigma_n}{\partial t_2} \\
	 &\pm\biggl(\sqrt{\Delta_1}-\sqrt{\Delta_2}\biggr)-\frac{\left( \biggl( t_1\frac{\partial^2}{\partial t_1^2}+t_2\frac{\partial^2}{\partial t_1\partial t_2}\biggr)\sigma_n \right)\left( \biggl( t_2\frac{\partial^2}{\partial t_2^2}+t_1\frac{\partial^2}{\partial t_1\partial t_2}\biggr)\sigma_n \right)+\sqrt{\Delta_1\Delta_2} }{2\biggl(t_1\frac{\partial\sigma_n}{\partial t_1}+t_2\frac{\partial\sigma_n}{\partial t_2}-\sigma_n+n(n+\alpha)  \biggr)},
\end{align*}	
which coincides with $(46)$	of \cite{ChenHaq}, where $\lambda_2=-\lambda_1$ and $t$, $T$, $N_s$ were used in place of $t_1$, $t_2$, $\lambda_1$.
\end{remark}

\section{Double Scaling Analysis at the Hard Edge}

\subsection{The Generalized $\sigma$-form of a Painlev\'{e} III Equation}
In this section, we study the behavior of the Hankel determinant $D_n(\vec{t} \,)$ at the hard edge.
It is well known that the Laguerre polynomials $P_n^{(\alpha)}(x)$ has the following property (see \cite[Theorem 8.1.1]{szego} )
\begin{align*}
\lim_{n \to \infty} n^{-\alpha}P_n^{(\alpha)}\biggl(1-\frac{z^2}{2n^2}\biggr)=\biggl( \frac{z}{2} \biggr)^{-\alpha}J_{\alpha}(z),
\end{align*}
where $J_{\alpha}(\cdot)$ is the Bessel function of the first kind of order $\alpha$. It motivates us to consider the double scaling that
$n\to \infty$ and $t_k\to 0^+$
such that $s_k=4nt_k$ for $k=1,\cdots,N$ is fixed.

Assuming
$\vec{t}=\frac{\vec{s}}{4n}$, $\vec{s}=(s_1,s_2,\dots,s_N)$,
we have
\begin{align*}
	\delta =\sum\limits_{k=1}^{N}t_k \frac{\partial }{\partial {t_k}}=\sum\limits_{k=1}^{N}s_k \frac{\partial }{\partial {s_k}}.
\end{align*}
Define
\begin{align}
	\sigma(\vec{s}\,):=\lim_{n \to \infty}\sigma_n\biggl(\frac{\vec{s}}{4n}\biggr).\label{sigma-s}
\end{align}
By using the finite $n$ results, i.e. Theorem \ref{theorem4.1} and the Riccati equation \eqref{re1},
we build direct relationships between the scaled $\{R_{n,k}, r_{n,k}, k=1,\cdots,N\}$ and $\sigma(\vec{s}\,)$.
\begin{proposition}
The auxiliary quantities $\{R_{n,k}, r_{n,k}, k=1,\cdots,N\}$ are scaled as below
\begin{align}
	&\lim_{n \to \infty}\frac{r_{n,k}\biggl(\frac{\vec{s}}{4n}\biggr)}{n}=-4\frac{\partial \sigma(\vec{s}\,)}{\partial {s_k}}, \label{limr}\\
	&\lim_{n \to \infty}R_{n,k}\biggl(\frac{\vec{s}}{4n}\biggr)=4\frac{\partial \sigma(\vec{s}\,)}{\partial {s_k}}=:R_k(\vec{s}\,). \label{limR}
\end{align}
\end{proposition}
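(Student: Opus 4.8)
The plan is to obtain both relations directly from the finite-$n$ identities of Theorem \ref{theorem4.1} and the first Riccati equation \eqref{re1}, together with the chain rule induced by the substitution $\vec{t}=\vec{s}/(4n)$. Set $\widetilde{\sigma}_n(\vec{s}\,):=\sigma_n(\vec{s}/(4n))$; by \eqref{sigma-s} we have $\widetilde{\sigma}_n\to\sigma$ as $n\to\infty$, and the chain rule gives the key identity $\frac{\partial\sigma_n}{\partial t_k}\big|_{\vec{t}=\vec{s}/(4n)}=4n\,\frac{\partial\widetilde{\sigma}_n}{\partial s_k}$, while the Euler operator $\delta$ has the same form in the $\vec{t}$ and the $\vec{s}$ variables.

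First I would prove \eqref{limr}. Evaluating \eqref{r-sigma} at $\vec{t}=\vec{s}/(4n)$ and using the chain-rule identity yields $r_{n,k}(\vec{s}/(4n))=-4n\,\frac{\partial\widetilde{\sigma}_n}{\partial s_k}$, hence $\frac{1}{n}\,r_{n,k}(\vec{s}/(4n))=-4\,\frac{\partial\widetilde{\sigma}_n}{\partial s_k}$. Letting $n\to\infty$ and interchanging the limit with $\partial/\partial s_k$ (granted locally uniform $C^1$ convergence $\widetilde{\sigma}_n\to\sigma$) gives exactly \eqref{limr}.

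For \eqref{limR} I would divide the Riccati equation \eqref{re1} by $2n$. Since $t_k=s_k/(4n)\to0$ and, assuming $R_k(\vec{s}\,):=\lim_{n\to\infty}R_{n,k}(\vec{s}/(4n))$ exists and is finite, one has $t_jR_{n,j}\to0$ for every $j$, so the coefficient $\frac{1}{2n}\bigl[2n+\alpha+t_k+\sum_{j=1}^{N}(\lambda_j-t_jR_{n,j})\bigr]$ tends to $1$; moreover $\delta R_{n,k}=\sum_{j=1}^{N}s_j\,\frac{\partial R_{n,k}}{\partial s_j}$ stays $O(1)$, so its quotient by $2n$ vanishes in the limit; and by \eqref{limr}, $\frac{1}{2n}(2r_{n,k}-\lambda_k)\to-4\,\frac{\partial\sigma}{\partial s_k}$. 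Collecting these, \eqref{re1} divided by $2n$ passes in the limit to $0=R_k(\vec{s}\,)-4\,\frac{\partial\sigma(\vec{s}\,)}{\partial s_k}$, which is \eqref{limR}.

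The algebra here is routine order-counting in $n$; the genuine difficulty is analytic rather than algebraic, namely justifying that the scaled quantities $r_{n,k}/n$ and $R_{n,k}$, together with their $\vec{s}$-derivatives, have finite limits, and that $\lim_{n\to\infty}$ commutes with $\partial/\partial s_k$ and with $\delta$. As is customary in the ladder-operator and Coulomb-fluid literature (cf. \cite{ChenMckay,H.Chen,LyuChenXu}), these regularity properties are adopted as the working hypotheses of the double-scaling analysis, after which the argument above is a direct computation. An alternative would be to substitute the scaled asymptotics of $\Delta_k$ and $\beta_n$ into \eqref{R-sigma} and extract the leading order, but this route is less transparent because of the square root and the sign function and ultimately reproduces the same identity.
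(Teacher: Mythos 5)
Your proof is correct and follows essentially the same route as the paper: \eqref{limr} is obtained by evaluating \eqref{r-sigma} at $\vec{t}=\vec{s}/(4n)$ and dividing by $n$, and \eqref{limR} by substituting the scaled $r_{n,k}$ into the Riccati equation \eqref{re1} and extracting the leading order in $n$. The only difference is that where you posit outright that $\lim_{n\to\infty}R_{n,k}(\vec{s}/(4n))$ exists and is finite, the paper first substitutes the scaling into \eqref{R-sigma} to conclude $R_{n,k}(\vec{s}/(4n))=O(\partial\sigma/\partial s_k)$ --- precisely the ``alternative'' route you mention at the end --- and uses that as the justification before passing to the limit in \eqref{re1}.
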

\begin{proof}
We use the relationships between $\{R_{n,k}, r_{n,k}\}$ and $\sigma_n$ in Theorem \ref{theorem4.1} to get the desired results. Substituting $\vec{t}=\frac{\vec{s}}{4n}$ and $t_k=\frac{s_k}{4n}$ into \eqref{r-sigma}, we get
\begin{align}
	r_{n,k}\biggl(\frac{\vec{s}}{4n}\biggr)=-4n\cdot  \frac{\partial }{\partial {s_k}}\sigma_n\biggl(\frac{\vec{s}}{4n}\biggr).\label{r-s}
\end{align}
Dividing its both sides by $n$ and taking the limit $n\to \infty$, in view of \eqref{sigma-s}, we arrive at \eqref{limr}.

Inserting $\vec{t}=\frac{\vec{s}}{4n}$ and $t_k=\frac{s_k}{4n}$ into \eqref{R-sigma}  yields
\begin{align}
R_{n,k}\biggl(\frac{\vec{s}}{4n}\biggr)=\frac{4n\sum\limits _{j=1}^{N} s_j\frac{\partial ^2}{\partial s_j\partial s_k}\sigma_n\biggl(\frac{\vec{s}}{4n}\biggr) \pm \sqrt{\Delta_k}}{2\left(-\sigma_n\biggl(\frac{\vec{s}}{4n}\biggr)+\sum\limits_{k=1}^{N} s_k\frac{\partial}{\partial {s_k}}\sigma_n\biggl(\frac{\vec{s}}{4n}\biggr) +n \biggl(n+\alpha +\sum\limits_{k=1}^{N}\lambda_k \biggr)\right)}, \label{R-s}
\end{align}
for $k=1,\cdots,N$, where $\Delta_k$ is defined by \eqref{delta_k} and now reads
\begin{align*}
\Delta_k=\left( 4n\sum\limits _{j=1}^{N} s_j\frac{\partial ^2\sigma_n}{\partial s_j\partial s_k} \right)^2
+16n\left(-\sigma_n+\sum\limits_{k=1}^{N} s_k\frac{\partial\sigma_n}{\partial {s_k}} +n \biggl(n+\alpha +\sum\limits_{k=1}^{N}\lambda_k \biggr)\right)\frac{\partial\sigma_n}{\partial s_k} \biggl(4n\frac{\partial\sigma_n}{\partial s_k}+\lambda_k \biggr).
\end{align*}	
Taking the limit $n\to \infty$ in \eqref{R-s}, we get
\begin{align*}
\lim_{n \to \infty}R_{n,k}\biggl(\frac{\vec{s}}{4n}\biggr)=\pm4\sqrt{\biggl(\frac{\partial\sigma}{\partial s_k} \biggr)^2},
\end{align*}
which indicates that $R_{n,k}\biggl(\frac{\vec{s}}{4n}\biggr)=O\biggl(\frac{\partial\sigma}{\partial s_k}\biggr)$ as $n\to\infty$. Therefore, replacing $r_{n,k}$ in \eqref{re1} by using \eqref{r-s} and $t_k$ by $\frac{s_k}{4n}$, and taking the limit $n\to \infty$ , we come to \eqref{limR}.
\end{proof}

Replacing $t_k$ by $\frac{s_k}{4n}$ in the PDEs for $R_{n,k}(\vec{t}\,)$, i.e. \eqref{PDEs} and sending $n$ to $\infty$,
in view of \eqref{limR}, we obtain the second order PDEs satisfied
by $R_k(\vec{s}\,)$.
\begin{theorem}
$\{R_k(\vec{s}\,),k=1,\cdots,N\}$ satisfy the following system of  PDEs
\begin{equation}
	\begin{aligned}
\delta^2R_k=&\frac{(\delta R_k)^2-\lambda_k^2}{2R_k}-R_k\sum_{j=1}^{N}\frac{(\delta R_j)^2-\lambda_j^2}{2R_j}+\frac{1}{2}\biggl(\sum_{j=1}^{N}\lambda_j \biggr)^2\biggl(1-\sum_{j=1}^{N}R_j\biggr)R_k\\
&+\frac{R_k}{2\biggl(\sum\limits_{j=1}^{N}R_j-1 \biggr)}\left(\biggl(\sum\limits_{k=1}^{N}\delta R_k-\sum\limits_{j=1}^{N}\lambda_j\sum\limits_{k=1}^{N}R_k+\sum\limits_{k=1}^{N}\lambda_k \biggr)^2-\alpha^2 \right)\\
&+\sum_{k=1}^{N}\lambda_k\cdot R_k \sum_{j=1}^{N}\delta R_j +\frac{R_k}{2}\biggl(s_k-\sum_{j=1}^{N}s_jR_j \biggr), \label{R_k}
	\end{aligned}
\end{equation}
for $k=1,\cdots,N$, where $\delta =\sum\limits_{j=1}^{N}s_j \frac{\partial }{\partial {s_j}}$ and $\delta^2=\sum\limits_{j=1}^{N}s_j^2\frac{\partial^2}{\partial s_j^2}+2\sum\limits_{1\le k<j\le N}s_ks_j\frac{\partial^2}{\partial s_k\partial s_j} +\delta.$
\end{theorem}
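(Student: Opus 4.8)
The plan is to take the finite-$n$ identity \eqref{PDEs}, substitute $t_j=s_j/(4n)$ in every term, and let $n\to\infty$, invoking \eqref{limR} together with the companion statements that $\delta R_{n,k}$ and $\delta^2 R_{n,k}$ converge to $\delta R_k$ and $\delta^2 R_k$. A preliminary remark is that the Euler operator is invariant under the scaling: with $n$ held fixed during the substitution one has $\partial/\partial s_j=(4n)^{-1}\partial/\partial t_j$, hence $s_j\,\partial/\partial s_j=t_j\,\partial/\partial t_j$ identically, so $\delta=\sum_j t_j\,\partial_{t_j}$ acting on $R_{n,k}(\vec{t}\,)$ coincides with $\sum_j s_j\,\partial_{s_j}$ acting on $\vec{s}\mapsto R_{n,k}(\vec{s}/(4n))$, and likewise $\delta^2$ is unchanged. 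Thus, granting that $\sigma_n(\vec{s}/(4n))\to\sigma(\vec{s}\,)$ with enough regularity to differentiate through the limit (a point addressed below), the left side $\delta^2 R_{n,k}$ tends to $\delta^2 R_k$, and every occurrence of $R_{n,k}$ or $\delta R_{n,k}$ on the right side may be replaced by its limit $R_k$ or $\delta R_k$.

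Next I would carry out the power-counting on the right side of \eqref{PDEs}. The essential distinction is between terms carrying a bare factor $t_j$ and terms in which $t_j$ multiplies the large quantity $2n$ coming from $\alpha_n$. Explicitly: $t_k R_{n,k}$ and $-R_{n,k}\sum_j t_j R_{n,j}$ carry $t_j=s_j/(4n)\to 0$ with the remaining factors bounded, so they vanish; but $(2n+\alpha-\sum_j t_j R_{n,j})t_k R_{n,k}$ contains $2n\,t_k=s_k/2$ and therefore tends to $\tfrac12 s_k R_k$, while its partner $-R_{n,k}\sum_j(2n+\alpha-\sum_k t_k R_{n,k})t_j R_{n,j}$ tends to $-\tfrac12 R_k\sum_j s_j R_j$; together these produce exactly the term $\tfrac12 R_k(s_k-\sum_j s_j R_j)$ of \eqref{R_k}. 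All other terms of \eqref{PDEs} converge directly: $\tfrac{(\delta R_{n,k})^2-\lambda_k^2}{2R_{n,k}}$ and its negatively weighted sum counterpart pass to their evident limits; in $\tfrac12 R_{n,k}((t_k+\sum_j\lambda_j)^2-\sum_j R_{n,j}(t_j+\sum_k\lambda_k)^2)$ the $t$'s drop out, leaving $\tfrac12(\sum_j\lambda_j)^2(1-\sum_j R_j)R_k$; and $R_{n,k}\sum_j\delta R_{n,j}\sum_k\lambda_k$ together with the final $\tfrac{R_{n,k}}{2(\sum_j R_{n,j}-1)}((\cdots)^2-\alpha^2)$ pass to the limit termwise. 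Assembling the survivors reproduces \eqref{R_k}.

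The \emph{main obstacle} is the analytic justification of interchanging $n\to\infty$ with the $s$-derivatives, i.e. upgrading \eqref{sigma-s} to convergence that is locally uniform together with its first and second derivatives, so that $R_{n,k}(\vec{s}/(4n))\to R_k$, $\delta R_{n,k}\to\delta R_k$ and $\delta^2 R_{n,k}\to\delta^2 R_k$ genuinely hold. The standard route is to exploit analyticity of $\sigma_n$ in $\vec{s}$ on a fixed polydisc, inherited from the analytic dependence of the moments on $\vec{t}>0$, and a normal-family/Vitali argument, after which convergence of the derivatives is automatic. One must also observe that the limiting system is meaningful only where $R_k\ne 0$ and $\sum_j R_j\ne 1$; since $R_k=4\,\partial\sigma/\partial s_k$ by \eqref{limR}, these are mild genericity conditions, while the denominators $R_{n,k}$ and $1-\sum_j R_{n,j}$ encountered en route are controlled by the same positivity that keeps $h_n$ and $\beta_n$ positive at finite $n$. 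As a consistency check I would specialise $N=1$ in \eqref{R_k} and match it against the hard-edge Painlev\'{e} III description of \cite{H.Chen}, which also validates the bookkeeping of the surviving terms.
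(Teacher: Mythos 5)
Your proposal is correct and follows essentially the same route as the paper: substitute $t_j=s_j/(4n)$ into \eqref{PDEs}, note that $\delta$ is invariant under the scaling, and pass to the limit using \eqref{limR}, with the only surviving $t$-dependent contributions coming from $2n\,t_k=s_k/2$ in the two $\alpha_n$-derived terms. Your term-by-term bookkeeping matches the paper's result exactly; the paper simply states the substitution-and-limit step without the analytic justification you discuss.
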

\begin{remark}
When $N=1$, with $s_1$ and $\lambda_1$ replaced by $s$ and $\lambda$ respectively,  the system \eqref{R_k} is
reduced to
\begin{align*}
2(s^2R_1''+sR_1')=&\frac{(sR_1')^2-\lambda^2}{R_1}-(sR_1')^2+(\lambda)^2+(s+\lambda^2)R_1(1-R_1)+2\lambda sR_1R_1'\\
&+\frac{R_1}{R_1-1}\biggl((sR_1'-\lambda R_1+\lambda)^2-\alpha^2\biggr).
\end{align*}
Denoting $Y_1(s):=\frac{R_1(s)}{R_1(s)-1}$, then $Y_1$ satisfies the following equation
\begin{align*}
	Y_1''=\biggl(\frac{1}{2Y_1}+\frac{1}{Y_1-1} \biggr)(Y_1')^2-\frac{Y_1'}{s}+\frac{(Y_1-1)^2}{s^2}\biggl(\frac{\alpha^2}{2}Y_1-\frac{\lambda^2}{2Y_1} \biggr)+\frac{Y_1}{2s},
\end{align*}
which is $P_V\biggl(\frac{\alpha^2}{2}, -\frac{\lambda^2}{2}, -\frac{1}{2}, 0\biggr)$ \cite{Jimbo}.
In this sense, we may treat equations \eqref{R_k} as the generalization of the Painlev\'{e} V equation.
\end{remark}

Substituting $\vec{t}=\frac{\vec{s}}{4n}$ and $t_k=\frac{s_k}{4n}$ into \eqref{sigma-pde}, and taking the limit $n\to \infty$, in view of \eqref{sigma-s}, we establish the second order PDE satisfied by
$\sigma(\vec{s}\,)$.
\begin{theorem}
$\sigma(\vec{s}\,)$ satisfies the following PDE
\begin{equation}
	\begin{aligned}
	&\frac{16\biggl(\sum\limits_{k=1}^{N}s_k \frac{\partial ^2\sigma}{\partial s_k^2}+\sum\limits_{1\le j< k\le N}(s_j+s_k)\frac{\partial ^2\sigma}{\partial s_j\partial s_k}  \biggr)^2-\alpha^2}{4\sum\limits_{k=1}^{N}\frac{\partial \sigma}{\partial s_k}-1}-\sum_{k=1}^{N}\frac{16\biggl(\sum\limits_{j=1}^{N}s_j \frac{\partial ^2\sigma}{\partial s_j\partial s_k} \biggr)^2-\lambda_k^2}{4\frac{\partial \sigma}{\partial s_k}}\\
	&+4\delta \sigma-4\sigma-\biggl(\alpha+\sum\limits_{j=1}^{N}\lambda_j \biggr)^2=0. \label{sigma}	
	\end{aligned}
\end{equation}
\end{theorem}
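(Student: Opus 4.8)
The plan is to read off \eqref{sigma} as the $n\to\infty$ limit of the finite-$n$ equation \eqref{sigma-pde} under the hard-edge scaling $\vec{t}=\vec{s}/(4n)$, exactly as announced before the statement. Write $\widetilde{\sigma}_n(\vec{s}):=\sigma_n(\vec{s}/(4n))$, so that $\widetilde{\sigma}_n\to\sigma$ by \eqref{sigma-s}. Under the substitution one has $\partial/\partial t_k=4n\,\partial/\partial s_k$ and $\partial^2/\partial t_k\partial t_j=16n^2\,\partial^2/\partial s_k\partial s_j$, while $\delta=\sum_k t_k\,\partial/\partial t_k=\sum_k s_k\,\partial/\partial s_k$ is unchanged. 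Using \eqref{beta-sigma} and \eqref{r-sigma} to keep everything in terms of $\sigma_n$, the three constituents of \eqref{sigma-pde} become fully explicit: the combination $-\sigma_n+\sum_k t_k\partial_{t_k}\sigma_n+n(n+\alpha+\sum_k\lambda_k)$ equals $\beta_n=-\widetilde{\sigma}_n+\delta\widetilde{\sigma}_n+n^2+n(\alpha+\sum_k\lambda_k)$; the two factors $n+\alpha-\sum_k\partial_{t_k}\sigma_n$ and $n-\sum_k\partial_{t_k}\sigma_n$ become $n+\alpha-4n\sum_k\partial_{s_k}\widetilde{\sigma}_n$ and $n-4n\sum_k\partial_{s_k}\widetilde{\sigma}_n$; the last term of \eqref{sigma-pde} becomes $16n^2\big(\sum_k\sum_j s_j\,\partial^2_{s_ks_j}\widetilde{\sigma}_n\big)^2$; and, after the same replacement, each $\Delta_k$ of \eqref{delta_k} becomes $16n^2\big(\sum_j s_j\partial^2_{s_js_k}\widetilde{\sigma}_n\big)^2+16n\,\beta_n\,\partial_{s_k}\widetilde{\sigma}_n\big(4n\partial_{s_k}\widetilde{\sigma}_n+\lambda_k\big)$.

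The heart of the argument is then an expansion in powers of $n$. From \eqref{limr}--\eqref{limR} one has $\partial_{s_k}\widetilde{\sigma}_n\to\partial_{s_k}\sigma$ with $R_k=4\partial_{s_k}\sigma$, so $\Delta_k=64n^4(\partial_{s_k}\widetilde{\sigma}_n)^2+O(n^3)$ and $\mathrm{sgn}(R_{n,k}+R_{n-1,k})\sqrt{\Delta_k}=8n^2\partial_{s_k}\widetilde{\sigma}_n+O(n)$, the sign being $+1$ in the limit (automatic when $\lambda_k\ge0$, since then $R_{n,k}>0$; in general it equals $\mathrm{sgn}(\partial_{s_k}\sigma)$ and \eqref{sigma} is to be read accordingly). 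One needs $\sqrt{\Delta_k}$ one order further, through its $O(1)$ term, because $\beta_n$ contributes at orders $n^2,n$ and $1$: writing $\Delta_k=A_kn^4+B_kn^3+C_kn^2+O(n)$ one uses $\sqrt{\Delta_k}=\sqrt{A_k}\,n^2+\tfrac{B_k}{2\sqrt{A_k}}\,n+\big(\tfrac{C_k}{2\sqrt{A_k}}-\tfrac{B_k^2}{8A_k^{3/2}}\big)+O(1/n)$. Plugging the expansions into \eqref{sigma-pde}, the $O(n^4)$ parts of $\big[2\beta_n-\sum_k\mathrm{sgn}(\cdots)\sqrt{\Delta_k}\big]^2$ and of $4\beta_n(n+\alpha-\cdots)(n-\cdots)$ both equal $4n^4\big(1-4\sum_k\partial_{s_k}\widetilde{\sigma}_n\big)^2$ and cancel; a short computation shows the $O(n^3)$ parts cancel as well (the common coefficient of $n^3$ being $4\big(1-4\sum_k\partial_{s_k}\widetilde{\sigma}_n\big)\big(2\alpha+\sum_k\lambda_k-(\alpha+\sum_k\lambda_k)\cdot4\sum_k\partial_{s_k}\widetilde{\sigma}_n\big)$), and this is precisely the point where \eqref{limR} is used. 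Since the last term of \eqref{sigma-pde} is only $O(n^2)$, dividing the whole equation by $n^2$ and letting $n\to\infty$ produces a genuine relation among $\sigma$, $\partial_{s_k}\sigma$ and $\partial^2_{s_js_k}\sigma$; collecting terms and dividing through by $4\sum_k\partial_{s_k}\sigma-1$ gives \eqref{sigma}, with the $\alpha^2$ coming from the $+\alpha$ in the factor $n+\alpha$ together with the $O(1)$ part of $\beta_n$, the $\lambda_k^2$ from the $O(1)$ term of $\sqrt{\Delta_k}$, and the $(\alpha+\sum_j\lambda_j)^2$ from the $O(n)$ part of $\beta_n$.

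The main obstacle is therefore not conceptual but the careful bookkeeping: one must compute $\sqrt{\Delta_k}$ through its constant term and verify, at orders $n^4$ and $n^3$, the cancellations that make \eqref{sigma-pde} an $O(n^2)$ statement, so that the surviving $O(n^2)$ balance coincides exactly with \eqref{sigma}. A secondary issue, which I would treat as in the rest of the paper (formally, as is customary for such double-scaling limits), is the interchange of $\lim_{n\to\infty}$ with the $s_k$-derivatives, i.e.\ that $\widetilde{\sigma}_n\to\sigma$ together with its first and second partials; once this is granted, together with the existence of the scaled limits in \eqref{limr}--\eqref{limR}, the derivation reduces to a finite computation.
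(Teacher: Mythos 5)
Your proposal is correct and follows exactly the route the paper takes: the paper's entire justification is the sentence preceding the theorem ("Substituting $\vec{t}=\frac{\vec{s}}{4n}$ into \eqref{sigma-pde} and taking the limit $n\to\infty$"), and your expansion in powers of $n$ — with the $O(n^4)$ and $O(n^3)$ cancellations and the surviving $O(n^2)$ balance — is precisely the bookkeeping that substitution entails, and it does reproduce \eqref{sigma}. The only caveat, which you already flag, is the formal interchange of the limit with derivatives and the sign of $\mathrm{sgn}(R_{n,k}+R_{n-1,k})\sqrt{\Delta_k}$, both of which the paper also treats implicitly.
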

\begin{remark}
When $N=1$, by using $s$ and $\lambda$ in place of $s_1$ and $\lambda_1$, \eqref{sigma} becomes
\begin{align*}
\frac{16(s\sigma'')^2-\alpha^2}{4\sigma'-1}
-\frac{4(s\sigma'')^2-\frac{\lambda^2}{4}}{\sigma'}+4s\sigma'-4\sigma-(\alpha+\lambda)^2=0.
\end{align*}
By introducing $s=4t$,
it can be transformed into the following Jimbo-Miwa-Okamoto $\sigma$-form of the Painlev\'{e} III equation (see \cite[equation (3.13)]{jimbo} ) satisfied by $\sigma(4t)$:
\begin{align*}
	 (t\sigma'')^2=4\sigma'(\sigma-t\sigma')(\sigma'-1)+((\alpha+\lambda)\sigma'-\lambda)^2,
\end{align*}
where the derivative is with respect to $t$. In this sense, we may regard \eqref{sigma} as the generalized  $\sigma$-form of a Painlev\'{e} III equation.
\end{remark}

\subsection{Equilibrium Density}

When the dimension of the Hermitian matrices from the deformed unitary ensemble associated with the weight function $w(x)$ tends to $\infty$,
the eigenvalues can be approximated as a continuous fluid with an equilibrium density $\psi(x)$.
If the potential function ${\rm v}(x)=-\ln{w(x)}$ is convex, then $\psi(x)$ was shown to be supported on a single interval $(a,b)$ \cite{ChenIsmail2}.
According to Dyson's Coulomb fluid theory \cite{Dyson} and the results in \cite{ChenIsmail2},
the equilibrium density is obtained by minimizing the free energy functional
\begin{align*}
	F[\psi]:=\int_{a}^{b} \psi(x){\rm v}(x)dx-\int_{a}^{b}\int_{a}^{b}\psi(x)\ln{|x-y|}\psi(y)dxdy
\end{align*}
subject to the normalization condition
\begin{align}
	\int_{a}^{b}\psi(x)dx=n. \label{sigma(x)}	
\end{align}
Upon minimization, the density $\psi(x)$ satisfies the following integral equation
\begin{align}
	{\rm v}(x)-2\int_{a}^{b}\ln{|x-y|}\psi(y)dy=A, \quad x\in (a,b), \label{A}
\end{align}
where $A$ is the Lagrange multiplier. Taking the derivative of \eqref{A} with respect to $x$ gives us the following singular integral equation
\begin{align*}
	{\rm v}'(x)-2P\int_{a}^{b}\frac{\psi(y)}{x-y}dy=0, \quad x\in (a,b),
\end{align*}
where $P$ denotes the Cauchy principal value. The solution of  this equation, subject to the boundary condition $\psi(a)=\psi(b)=0$, is given by
\begin{align}
	\psi(x)=\frac{\sqrt{(b-x)(x-a)}}{2\pi^2} P\int_{a}^{b}\frac{{\rm v}'(x)-{\rm v}'(y)}{(x-y)\sqrt{(b-y)(y-a)}}dy, \label{sigma-solution}
\end{align}
with a supplementary condition
\begin{align}
	\int_{a}^{b}\frac{{\rm v}'(x)dx}{\sqrt{(b-x)(x-a)}}=0. \label{condition1}
\end{align}
Substituting \eqref{sigma-solution} into \eqref{sigma(x)} leads us to
\begin{align}
	\int_{a}^{b}\frac{x{\rm v}'(x)dx}{\sqrt{(b-x)(x-a)}}=2n\pi.
	\label{condition2}
\end{align}
For our problem, ${\rm v}(x)=x-\alpha\ln{x}-\sum\limits_{k=1}^{N}\lambda_k \ln{(x+t_k)}$.
Hence we obtain
\begin{align*}
	{\rm v}''(x)=\frac{\alpha}{x^2}+\sum\limits_{k=1}^{N}\frac{\lambda_k}{(x+t_k)^2}.
\end{align*}
To ensure ${\rm v}(x)$ to be convex, i.e. ${\rm v}''(x)>0$, we assume in this section $\lambda_k\ge0$ for $k=1,\cdots,N$.

Now we proceed to calculate $\psi(x)$ and $A$ by using
\eqref{sigma-solution} and \eqref{A} respectively. With the help of the integral identities, we get the following expressions.
\begin{proposition}
For $\lambda_k\ge0, k=1,\cdots,N$, the equilibrium density for the eigenvalues of the deformed Laguerre unitary ensemble associated with the weight function \eqref{w(x)} is given by
\begin{align}
	 \psi(x)=\frac{\sqrt{(b-x)(x-a)}}{2\pi}\left(\frac{\alpha}{x\sqrt{ab}}+\sum_{k=1}^{N}\frac{\lambda_k}{x+t_k}\cdot \frac{1}{\sqrt{(a+t_k)(b+t_k)}} \right),\label{sigmax}
\end{align}
and the  Lagrange multiplier in the integral equation \eqref{A} satisfied by $\psi(x)$ reads
\begin{equation}
	\begin{aligned}
	 A=&\frac{a+b}{2}-\alpha\ln{\left(\frac{a+b+2\sqrt{ab}}{4} \right)}-2n\ln{\biggl(\frac{b-a}{4}\biggr)}\\
	 &-\sum\limits_{k=1}^{N}\lambda_k\ln{\left(\frac{a+b+2t_k+2\sqrt{(a+t_k)(b+t_k)}}{4}\right)}. \label{a}	
	\end{aligned}
\end{equation}
\end{proposition}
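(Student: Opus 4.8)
The plan is to evaluate the two one-dimensional (singular) integrals appearing in \eqref{sigma-solution} and \eqref{A} explicitly, exploiting that the difference quotient $\frac{{\rm v}'(x)-{\rm v}'(y)}{x-y}$ has already been computed in \eqref{v'}.

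\textbf{Step 1 (the density).} Substituting \eqref{v'} into \eqref{sigma-solution}, the integrand becomes $\bigl(\frac{\alpha}{xy}+\sum_{k=1}^{N}\frac{\lambda_k}{(x+t_k)(y+t_k)}\bigr)\frac{1}{\sqrt{(b-y)(y-a)}}$, in which, for fixed $x\in(a,b)$, the factors $\frac1x$ and $\frac1{x+t_k}$ are constants while $\frac1y$ and $\frac1{y+t_k}$ are smooth on $[a,b]$ because $0$ and $-t_k$ lie outside $[a,b]$ (recall $a>0$). Hence the principal value is actually an ordinary integral, and one only needs the elementary identity $\int_a^b\frac{dy}{(y-c)\sqrt{(b-y)(y-a)}}=\frac{\pi}{\sqrt{(a-c)(b-c)}}$ valid for $c<a$, which follows from the substitution $y=\frac{a+b}{2}-\frac{b-a}{2}\cos\theta$ together with $\int_0^{\pi}\frac{d\theta}{p-q\cos\theta}=\frac{\pi}{\sqrt{p^2-q^2}}$, $p=\frac{a+b}{2}-c$, $q=\frac{b-a}{2}$. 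Applying this with $c=0$ and $c=-t_k$ gives $\int_a^b\frac{dy}{y\sqrt{(b-y)(y-a)}}=\frac{\pi}{\sqrt{ab}}$ and $\int_a^b\frac{dy}{(y+t_k)\sqrt{(b-y)(y-a)}}=\frac{\pi}{\sqrt{(a+t_k)(b+t_k)}}$, and inserting these into \eqref{sigma-solution} yields \eqref{sigmax} at once.

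\textbf{Step 2 (the multiplier $A$).} Since \eqref{A} holds for every $x\in(a,b)$ and $\int_a^b\frac{dx}{\pi\sqrt{(b-x)(x-a)}}=1$, I integrate \eqref{A} against this arcsine density. The double-integral term becomes, after exchanging the order of integration, $\frac2\pi\int_a^b\psi(y)\bigl(\int_a^b\frac{\ln|x-y|\,dx}{\sqrt{(b-x)(x-a)}}\bigr)dy$. The classical fact that the logarithmic potential of the arcsine distribution is constant on its support, $\int_a^b\frac{\ln|x-y|\,dx}{\pi\sqrt{(b-x)(x-a)}}=\ln\frac{b-a}{4}$ for $y\in(a,b)$ (proved by the same trigonometric change of variables together with $\frac1\pi\int_0^\pi\ln|\cos\theta-\cos\phi|\,d\theta=-\ln 2$), combined with the normalization \eqref{sigma(x)}, turns this term into $2n\ln\frac{b-a}{4}$. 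Hence $A=\int_a^b\frac{{\rm v}(x)\,dx}{\pi\sqrt{(b-x)(x-a)}}-2n\ln\frac{b-a}{4}$. It then remains to compute the moments of ${\rm v}(x)=x-\alpha\ln x-\sum_k\lambda_k\ln(x+t_k)$: the same substitution gives $\int_a^b\frac{x\,dx}{\pi\sqrt{(b-x)(x-a)}}=\frac{a+b}{2}$ and, via $\frac1\pi\int_0^\pi\ln(p-q\cos\theta)\,d\theta=\ln\frac{p+\sqrt{p^2-q^2}}{2}$, the identity $\int_a^b\frac{\ln(x-c)\,dx}{\pi\sqrt{(b-x)(x-a)}}=\ln\frac{a+b-2c+2\sqrt{(a-c)(b-c)}}{4}$ for $c<a$; taking $c=0$ and $c=-t_k$ and assembling the three contributions reproduces \eqref{a}.

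\textbf{Main obstacle.} Everything reduces to standard trigonometric-substitution integrals, so the only non-mechanical point is the evaluation of the genuinely singular integral in Step 2 — the constancy of the logarithmic potential of the arcsine law — and, to a lesser extent, keeping the correct branch of the logarithm in the moment identities of Step 1 and Step 2; once these identities are secured the remainder is bookkeeping. (Note that the proposition takes the band endpoints $a,b$ as given — they are fixed implicitly by \eqref{condition1}–\eqref{condition2} — so no separate analysis of $a,b$ is needed here.)
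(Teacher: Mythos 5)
Your proposal is correct and follows essentially the same route as the paper: substitute the difference-quotient formula for ${\rm v}'$ into \eqref{sigma-solution} and evaluate the resulting regular integrals via the identities $\int_a^b\frac{dy}{(y+t)\sqrt{(b-y)(y-a)}}=\frac{\pi}{\sqrt{(a+t)(b+t)}}$, then integrate \eqref{A} against the arcsine measure and use the constancy of its logarithmic potential together with the log-moment identities. The only cosmetic difference is how that constancy is established (you evaluate $\frac1\pi\int_0^\pi\ln|\cos\theta-\cos\phi|\,d\theta=-\ln2$ directly, whereas the paper notes that the $y$-derivative is a vanishing principal-value integral and then sets $y=b$), and your closed form $\ln\frac{a+b-2c+2\sqrt{(a-c)(b-c)}}{4}$ agrees with the paper's $2\ln\frac{\sqrt{a-c}+\sqrt{b-c}}{2}$.
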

\begin{proof}
Substituting \eqref{v'} into \eqref{sigma-solution}, with the aid of the following integral identities for $0<a<b$
(see \cite[equation (263)]{ChenMckay} )
\begin{align*}
	 &\int_{a}^{b}\frac{dx}{x\sqrt{(b-x)(x-a)}}=\frac{\pi}{\sqrt{ab}}, \\
	 &\int_{a}^{b}\frac{dx}{(x+t)\sqrt{(b-x)(x-a)}}=\frac{\pi}{\sqrt{(a+t)(b+t)}},
\end{align*}
we get \eqref{sigmax}.

Next, we calculate the Lagrange multiplier $A$.  Multiplying both sides of \eqref{A} by $\frac{1}{\sqrt{(b-x)(x-a)}}$
and integrating it with respect to $x$ from $a$ to $b$, in view of \eqref{v(x)}, we have
\begin{align}
	 \int_{a}^{b}\frac{x-\alpha\ln{x}-\sum\limits_{k=1}^{N}\lambda_k\ln{(x+t_k)}}{\sqrt{(b-x)(x-a)}}dx-2\int_{a}^{b}\psi(y)dy\int_{a}^{b}\frac{\ln{|x-y|}}{\sqrt{(b-x)(x-a)}}dx=A\pi. \label{Api}	
\end{align}
We look at the first integral in the above equality. Based on the following integral identities for $0<a<b$
(see equations (265) and (248) in \cite{ChenMckay} )
\begin{align*}
	 &\int_{a}^{b}\frac{xdx}{\sqrt{(b-x)(x-a)}}=\frac{a+b}{2}\pi, \\
	&\int_{a}^{b}\frac{\ln{x} dx}{\sqrt{(b-x)(x-a)}}=2\pi \ln{\left(\frac{\sqrt a+\sqrt b}{2}\right)}, \\
	 &\int_{a}^{b}\frac{\ln{(x+t)}dx}{\sqrt{(b-x)(x-a)}}=2\pi \ln{\left(\frac{\sqrt{a+t}+\sqrt{b+t}}{2}\right)},\;  t>0,
\end{align*}
we arrive at
\begin{align}
	 &\int_{a}^{b}\frac{x-\alpha\ln{x}-\sum\limits_{k=1}^{N}\lambda_k\ln{(x+t_k)}}{\sqrt{(b-x)(x-a)}}dx \nonumber\\
	=&\frac{a+b}{2}\pi-2\pi \alpha \ln{\left(\frac{\sqrt a+\sqrt b}{2}\right)}-2\pi\sum\limits_{k=1}^{N}\lambda_k\ln{\left(\frac{\sqrt{a+t_k}+\sqrt{b+t_k}}{2}\right)}. \label{integral1}
\end{align}
Now we focus on the integral for $x$ in the second term on the left hand side of \eqref{Api}.
According to the following formula for principal value integral
\begin{align*}
	P\int_{a}^{b}\frac{dx}{(x-y)\sqrt{(b-x)(x-a)}}=0,
\end{align*}
we arrive at
\begin{align}
	 \int_{a}^{b}\frac{\ln{|x-y|}}{\sqrt{(b-x)(x-a)}}dx=C, \label{C}
\end{align}
where $C$ is a constant independent of $y$. Hence, we replace $y$ by $b$ in \eqref{C} and get
\begin{align}
	\int_{a}^{b}\frac{\ln{|x-y|}}{\sqrt{(b-x)(x-a)}}dx
	=&\int_{a}^{b}\frac{\ln{(b-x)}}{\sqrt{(b-x)(x-a)}}dx \quad\quad   \nonumber\\
	=&\int_{0}^{1}\frac{\ln{(b-a)}}{\sqrt{(1-S)S} }dS+\int_{0}^{1}\frac{\ln{(1-S)}}{\sqrt{(1-S)S} }dS \nonumber
	\\
	=&\pi\ln{(b-a)}-2\pi\ln{2}, \label{integral2}
\end{align}
where $S=\frac{x-a}{b-a}$, and to get the last equality we make use of the following identities for $0<a<b$
(see equations (264) and (248) in \cite{ChenMckay} ):
\begin{align*}
	&\int_{0}^{1}\frac{dS}{\sqrt{(1-S)S} }=\pi, \\
	&\int_{0}^{1}\frac{\ln{(1-S)}}{\sqrt{(1-S)S} }dS=-2\pi\ln{2}.
\end{align*}
Plugging \eqref{integral1} and \eqref{integral2} back into \eqref{Api}, on account of \eqref{sigma(x)}, we are led to \eqref{a}.
\end{proof}

\begin{remark}
Inserting ${\rm v}(x)$ given by \eqref{v(x)} into \eqref{condition1} and \eqref{condition2},  we get two equations for $a$ and $b$
\begin{equation}
	\begin{aligned}
&\frac{\alpha}{\sqrt{ab}}+\sum\limits_{k=1}^{N}\frac{\lambda_k}{\sqrt{(a+t_k)(b+t_k)}}-1=0,\\
&2n+\alpha+\sum\limits_{k=1}^{N}\lambda_k-\sum\limits_{k=1}^{N}\frac{\lambda_k t_k}{\sqrt{(a+t_k)(b+t_k)}}-\frac{a+b}{2}=0.	 \label{N}
	\end{aligned}
\end{equation}
When $N=1$, these two identities are reduced to
\begin{align}
	 &\frac{\alpha}{\sqrt{ab}}+\frac{\lambda_1}{\sqrt{(a+t_1)(b+t_1)}}-1=0, \label{1}\\
	&2n+\alpha+\lambda_1-\frac{\lambda_1 t_1}{\sqrt{(a+t_1)(b+t_1)}}-\frac{a+b}{2}=0.\nonumber
\end{align}
By using them, it was shown in \cite{ChenFilipuk} that the corresponding monic orthogonal polynomials satisfy the confluent Heun equation. We summarize the derivation as follows. By eliminating the term $\frac{\lambda_1}{\sqrt{(a+t_1)(b+t_1)}}$ from the above two equations, a direct relationship between $\sqrt{ab}$ and $\frac{a+b}{2}$ was found. In view of it, and by clearing the square root in \eqref{1}, an algebraic equation was derived for $\frac{a+b}{2}$, from which the asymptotic expansion of the recurrence coefficient $\beta_n$ followed. And consequently, the confluent Heun equation for the monic orthogonal polynomials was deduced.

For our problem with $N$ general,  it is hard to eliminate the terms
$((a+t_k)(b+t_k))^{-\frac{1}{2}}$ for $k=1,\cdots,N$ from \eqref{N} to build a relationship between $\sqrt{ab}$ and $\frac{a+b}{2}$. Therefore, the subsequent arguments for $N=1$ to derive the confluent Heun equation are not applicable to our problem.
\end{remark}
\begin{remark}
	When $\lambda_k=0$ for $k=1,\cdots,N$, \eqref{sigmax} becomes
	\begin{align}
		\psi(x)=\frac{\alpha}{2\pi\sqrt{ab}} \cdot \frac{\sqrt{(b-x)(x-a)}}{x},\quad  x\in(a,b). \label{psi}	
	\end{align}
And it follows from \eqref{N} that
\begin{align}
	ab=\alpha^2, \qquad a+b=2\alpha+4n. \label{ab}
\end{align}
 Substituting them into \eqref{psi} and letting $x=4ny$ in the resulting expression, by sending $n$ to $\infty$, we get
\begin{align*}
	\psi(x)=\frac{1}{2\pi}\sqrt{\frac{1-y}{y}}, \qquad  y\in(0,1),	
\end{align*}
which  coincides with (19.1.11) of \cite{Mehta}. Here $y\in(0,1)$ is obtained as below.

Write $\hat{a}:=\frac{a}{4n}$ and $\hat{b}:=\frac{b}{4n}$. Since $x=4ny\in(a,b)$, we have $y\in(\hat{a},\hat{b} )$. Replacing $a$ by $4n\hat{a}$ and $b$ by $4n\hat{b}$ in \eqref{ab}, after simplification, we get
\begin{align*}
	\hat{a}\cdot \hat{b}=\frac{\alpha^2}{16n^2}, \qquad 	 \hat{a}+\hat{b}=\frac{\alpha}{2n}+1,
\end{align*}
which, as $n\to\infty$, gives us $\hat{a}\cdot \hat{b}=0$ and $\hat{a}+\hat{b}=1$. Consequently, we find that $\hat{a}=0$ and $\hat{b}=1$, i.e. $y\in(0,1)$.
\end{remark}

\section*{Acknowledgments}
This work was supported by National Natural Science Foundation of China under grant numbers 12101343 and 12371257, and by Shandong Provincial Natural Science Foundation with project number ZR2021QA061.


\begin{thebibliography}{99}

\bibitem{Charlier2019}
C. Charlier, Asymptotics of Hankel determinants with a one-cut regular potential and Fisher-Hartwig singularities,
Int. Math. Res. Notices {\bf2019} (2019), 7515-7576.
	
\bibitem{CharlierDeano}
C. Charlier and A. Dea\~{n}o, Asymptotics for Hankel determinants associated to a Hermite weight with a varying discontinuity, Symmetry Integr. Geom. {\bf14} (2018), 018 (43 pages).


\bibitem{Charlier2021}
C. Charlier and R. Gharakhloo, Asymptotics of Hankel determinants with a
Laguerre-type or Jacobi-type potential and Fisher-Hartwig singularities,
Adv. Math. {\bf383} (2021), 107672 (69 pages).


\bibitem{H.Chen}
H. Chen, M. Chen, G. Blower, and Y. Chen, Single-user MIMO system, Painlev\'{e} transcendents, and double scaling, J. Math. Phys. {\bf58} (2017), 123502 (23 pages).





\bibitem{ChenFeigin}
Y. Chen and M. Feigin,
Painlev\'{e} IV and degenerate Gaussian unitary ensembles, J. Phys. A:Math. Gen. {\bf39} (2006), 12381-12393.



\bibitem{ChenFilipuk}
Y. Chen, G. Filipuk and L. Zhan, Orthogonal polynomials, asymptotics, and Heun equations, J. Math. Phys. {\bf60} (2019), 113501 (34 pages).

\bibitem{ChenHaq}
Y. Chen, N. Haq and M. McKay, Random matrix models, double-time Painlev\'{e} equations, and wireless relaying,  J. Math. Phys. {\bf54} (2013), 063506 (55 pages).



\bibitem{ChenIsmail}
Y. Chen and M. Ismail, Ladder operators and differential equations for orthogonal polynomials, J. Phys. A: Math. Gen. {\bf30} (1997), 7817-7829.


\bibitem{ChenIsmail2}
Y. Chen and M. Ismail, Thermodynamic relations of the Hermitian matrix ensembles, J. Phys. A: Math. Gen. {\bf30} (1997), 6633-6654.




\bibitem{ChenLyu}
Y. Chen and S. Lyu, Gaussian unitary ensembles with jump discontinuities, PDEs and the coupled Painlev\'{e} IV system,  Recent Progress in Special Functions (Editor
 G. Filipuk), accepted by Contemporary Mathematics, Amer. Math. Soc., to appear
 in Nov. 2024 (arXiv: 2304.04127).

\bibitem{ChenMckay}
Y. Chen and M. McKay, Coulumb fluid, Painlev\'{e} transcendents, and the information theory of MIMO system, IEEE Trans. Inf. Theory
{\bf58} (2012), 4594-4634.


\bibitem{Chihara}
T. Chihara, An introduction to orthogonal polynomials, Dover, New York, 1978.

\bibitem{DaiXuZhang}
D. Dai, S. Xu and L. Zhang, Gaussian unitary ensembles with pole singularities near the soft edge and a system of coupled Painlev\'{e} XXXIV equations,
Ann. Henri Poincare {\bf20} (2019), 3313-3364.


\bibitem{Dyson}
F. J. Dyson, Statistical theory of the energy levels of complex systems, I, II, III, J.Math. Phys. {\bf3} (1962) 140-156, 157-165, 166-175.

\bibitem{DeiftIts}
P. Deift and A. Its, Asymptotics of Toeplitz, Hankel, and Toeplitz+Hankel determinants with Fisher-Hartwig singularities, Ann. Math. {\bf174} (2011), 1243-1299.


\bibitem{DingMin}
Y. Ding and C. Min,
Semi-classical orthogonal polynomials associated with a modified Gaussian weight, Results Math. {\bf73} (2024), 102 (17 pages).


\bibitem{Ismail}
M. Ismail, Classical and Quantum Orthogonal Polynomials in One Variable, Encyclopedia of Mathematics and its Applications 98, Cambridge University Press, Cambridge, 2005.


\bibitem{Jimbo}
M. Jimbo and T. Miwa,
Monodromy perserving deformation of linear ordinary differential equations with rational coefficients. II, Physica D
{\bf2} (1981), 407-448.

\bibitem{jimbo}
M. Jimbo, Monodromy problem and the boundary condition for some Painlev\'{e} equations,
Publ. RIMS, Kyoto Univ. {\bf18} (1982), 1137-1161.



\bibitem{Kundu1}
N. Kundu, S. Dash, M. McKay and R. Mallik, Channel Estimation and Secret Key Rate Analysis of MIMO Terahertz Quantum Key Distribution, IEEE Trans. Commun. {\bf70} (2022), 3350-3363.



\bibitem{Liu}
F. Liu, C. Masouros, A. Petropulu, H. Griffiths and L. Hanzo, Joint radar and communication design: applications, state-of-the-art, and the road ahead, IEEE Trans. Commun. {\bf68} (2020), 3834-3862.


\bibitem{Lopez-Martinez}
F. Lopez-Martinez, E. Martos-Naya, J. Paris and A. Goldsmith,
Eigenvalue dynamics of a central Wishart matrix with application to MIMO systems, IEEE Trans. Inf. Theory {\bf61} (2015), 2693-2707.

\bibitem{LyuChenXu}
S. Lyu, Y. Chen and S. Xu, Laguerre unitary ensembles with jump discontinuities, PDEs and the coupled Painlev\'{e} V system, Physica D {\bf449} (2023), 133755 (14 pages).


\bibitem{LyuGriffinChen}
S. Lyu, J. Griffin and Y. Chen, The Hankel determinant associated with a singularly perturbed Laguerre unitary ensemble, J. Nonlinear Math. Phys. {\bf26} (2019), 24-53.

\bibitem{Magnus}
A. P. Magnus, Painlev\'{e}-type differential equations for the recurrence coefficients of semi-classical orthogonal polynomials,
J. Comput. Appl. Math. {\bf57} (1995), 215-237.


\bibitem{Mehta}
M. Mehta, Random Matrices, third edition, Elsevier, New York, 2004.





\bibitem{MinChenNPB1}
C. Min, S. Lyu and Y. Chen,  Painlev\'{e} III' and the Hankel determinant generated by
a singularly perturbed Gaussian weight, Nucl. Phys. B {\bf936} (2018), 169-188.




\bibitem{MinChenNPB2}
C. Min and Y. Chen,  Painlev\'{e} V and the Hankel determinant for a singularly
perturbed Jacobi weight, Nucl. Phys. B {\bf961} (2020), 115221 (25 pages).

\bibitem{MinChenStud}
C. Min and Y. Chen, Differential, difference, and asymptotic relations for Pollaczek-Jacobi type orthogonal
 polynomials and their Hankel determinants,
Stud. Appl. Math. {\bf147} (2021), 390-416.




\bibitem{MinChenJMP}
C. Min and Y. Chen, Hankel determinant and orthogonal polynomials for a perturbed Gaussian weight: From finite $n$ to large $n$ asymptotics, J. Math. Phys. {\bf64} (2023), 083503 (20 pages).

\bibitem{MinWang}
C. Min and L. Wang, Asymptotics of the smallest eigenvalue distributions of Freud unitary ensembles,  arXiv: 2402.15190.



\bibitem{MuLyu}
X. Mu and S. Lyu, Hankel determinants for a Gaussian weight with Fisher-Hartwig singularities and generalized Painlev\'{e} IV equation, J. Phys. A: Math. Theor. {\bf56} (2023), 475201 (25 pages).


\bibitem{Narmanlioglu}
O. Narmanlioglu and M. Uysal, Multi-user massive MIMO visible light communications with limited pilot transmission, IEEE Trans. Wirel. Commun. {\bf21} (2022), 4197-4211.


\bibitem{QiQian}
Y. Qi, R. Qian and N. Li, Coulomb gas analogy: a statistical physics approach to performance analysis of MIMO systems, IEEE Trans. Veh. Technol. {\bf68} (2019), 1984-1988.



\bibitem{szego}
G. Szeg{\"o},  Orthogonal Polynomials, American Mathematical Society Colloquium Publications, Vol. 23, New York, 1939.


\bibitem{Telatar}
E. Telatar, Capacity of multi-antenna Gaussian channels, Eur. Trans. Telecomm. {\bf10} (1999), 585-596.


\bibitem{TulinoVerdu}
A. Tulino and S. Verd\'{u}, Random matrix theory and wireless communications, Found. Trends Commun. Inf. Theory  {\bf1} (2004), 1-182.



\bibitem{vanassche}
W. Van Assche, Orthogonal Polynomials and Painlev\'{e} Equations, Cambridge University Press, New York, 2018.




\bibitem{WuXu}
X. Wu and S. Xu, Gaussian unitary ensemble with jump discontinuities and the coupled Painlev\'{e} II and IV systems, Nonlinearity {\bf34} (2021), 2070-2115.

\bibitem{XuZhao}
S. Xu and Y. Zhao, Gap probability of the circular unitary ensemble with a Fisher-Hartwig singularity and the coupled Painlev\'{e} V system,
Commun. Math. Phys. {\bf377} (2020), 1545-1596.


\bibitem{ZhuChen}
J. Yu, S. Chen, C. Li, M. Zhu, and Y. Chen, Painlev\'{e} V and confluent Heun equations associated with a perturbed Gaussian
unitary ensemble, J. Math. Phys. {\bf64} (2023), 083501 (22 pages).


\end{thebibliography}
\end{document}